\documentclass[12pt]{article}
\usepackage[pdftex]{graphicx}
\usepackage{xcolor}
\usepackage{amsmath}
\usepackage{booktabs}
\usepackage{authblk}
\usepackage{apacite}
\usepackage{longtable}
\usepackage{amsfonts}
\usepackage{mathrsfs}
\usepackage{amsmath,amsthm,amssymb}
\usepackage{caption}
\usepackage{subcaption}

\addtolength{\oddsidemargin}{-.5in}%
\addtolength{\evensidemargin}{-1in}%
\addtolength{\textwidth}{1in}%
\addtolength{\textheight}{1.7in}%
\addtolength{\topmargin}{-1in}%

\small\normalsize
\newtheorem{proposition}{Proposition}
\newtheorem{assumption}{Assumption}

\newtheorem{remark}{Remark}

 \def\E{\mbox{E}}


 \def\bzeta{\mbox{\boldmath $\zeta$}}
\def\bet{\mbox{\boldmath $\eta$}} 
\def\bGamma{\mbox{\boldmath $\Gamma$}}
 
\def\bxi{\mbox{\boldmath $\xi$}}

\def\bpsi{\mbox{\boldmath $\psi$}}

\def\bchi{\mbox{\boldmath $\chi$}}

\def\bef{\mathbf{f}}

\def\bvartheta{\mbox{\boldmath $\vartheta$}}

\def\bSigma{\mathbf{\Sigma}}

\def\bphi{\mbox{\boldmath $\phi$}}
   \def\bA{\mathbf{A}}
 \def\bK{\mathbf{K}} \def\bQ{\mathbf{Q}} 
\def\bF{\mathbf{F}} \def\bB{\mathbf{B}} \def\bD{\mathbf{D}} 
  \def\bp{\mathbf{p}} \def\bP{\mathbf{P}}
 
\def\bJ{\mathbf{J}}  \def\bL{\mathbf{L}} 
\def\bu{\mathbf{u}}    \def\0{\mbox{\bf{0}}}
\def\bs{\mathbf{s}}\def\bS{\mathbf{S}}  \def\bI{\mathbf{I}}
\def\bz{\mathbf{z}}\def\bZ{\mathbf{Z}}   \def\bM{\mathbf{M}}
    
  \def\bx{\mathbf{x}}
 %

  
             \def\WN{\mbox{WN}}
\def\E{\mbox{E}}       \def\diag{\mbox{diag}}
             \def\Cov{\mbox{Cov}} \def  \Var{\mbox{Var}}                                          
                
\def\r{\mbox{$r$}}


\graphicspath{{AltreFigure/}}
\begin{document}
\bibliographystyle{apacite}
\renewcommand{\BBAA}{and}
\renewcommand{\BBAB}{and}
\renewcommand{\BRetrievedFrom}{Downloadable at }

\clearpage

\title{{\huge Band-Pass Filtering with High-Dimensional Time Series} \thanks{ \scriptsize The authors are grateful to Esther Ruiz, Pilar Poncela and Valentina Aprigliano for their valuable comments, which led to several improvements in both the presentation and the content of the paper. The authors are also grateful to the partecipants of 41th International Symposium on Forecasting 2021; 3rd Italian Workshop of Econometrics and Empirical Economics, Rimini 2022; 5th Workshop on High-Dimensional Time Series in Macroeconomics and Finance, Wien 2022; 42th International Symposium on Forecasting 2022, Oxford; Rome-Waseda Time Series Symposium Villa Mondragone 2022; 16th Computational and Financial Econometrics (CFE), King's College London 2022. 
Alessandro Giovannelli and Tommaso Proietti gratefully acknowledge financial support by the Italian Ministry of Education, University and Research, Progetti di Ricerca di Interesse Nazionale, research project 2020-2023, project 2020N9YFFE.}}

\author[1]{A. Giovannelli}
\author[2]{M. Lippi}
\author[3]{T. Proietti}
\affil[1]{\small University of L'Aquila, Italy}
\affil[2]{\small Einaudi Institute for Economics and Finance, Rome, Italy}
\affil[3]{\small University of Rome ``Tor Vergata", Italy}

\date{May 8 2023}
\maketitle
\begin{abstract}
The paper deals with the construction of a synthetic indicator of economic growth, obtained by projecting a quarterly measure of aggregate economic activity, namely gross domestic product (GDP), into the space spanned by a finite number of smooth principal components, representative of the medium-to-long-run component of economic growth of a high-dimensional time series, available at the monthly frequency.
The smooth principal components result from applying a  cross-sectional filter distilling the low-pass component of growth in real time.
The outcome of the projection is a monthly nowcast of the medium-to-long-run component of GDP growth. After discussing the theoretical properties of the indicator, we deal with the assessment of its reliability and predictive validity with reference to a panel of macroeconomic U.S. time series.
\vspace{.5cm}

\noindent
\emph{Keywords}: Nowcasting. Principal Components Analysis. Macroeconomic Indicators.

\vspace{.5cm}
\noindent
\emph{JEL Codes}: C22, C52, C58.
\end{abstract}
\thispagestyle{empty}
\newpage

\section{Introduction}


Nowcasting deals with the real time assessment of the underlying growth rate of the economy, with the aid of the available indirect information provided by a large number of more timely macroeconomic monthly indicators.
A broad and representative measure of aggregate economic activity is offered by  gross domestic product (GDP), which is however available only quarterly and with a publication delay. On the contrary, it is desirable to carry out the above assessment at least with monthly frequency.

There is a large literature on the GDP nowcasting problem. Bridge models, see \shortciteA{baffigi2004bridge} and MIxed frequency DAta Sampling models   \shortcite{ghysels2007midas,kuzin2013pooling,foroni2014comparison}, aim at nowcasting quarterly GDP using monthly indicators.
Rather than targeting quarterly GDP, it can be deemed more relevant, for the purpose of monitoring the current state of the economy, to nowcast GDP at the monthly frequency.
Low-dimensional exact dynamic factor models have been proposed by \shortciteA{mariano2003new},  \shortciteA{camacho2010introducing},  \shortciteA{frale2011euromind}, and \shortciteA{aruoba2016improving} for this task. More recent contributions have looked at new 
High-dimensional approximate factor models \shortcite{forni2000generalized,stock2002forecasting,stock2002macroeconomic,bai2002determining,hallin2020time} play a pivotal role for nowcasting. They are based on a solid representation theory and provide the way of distilling the comovements in a large set of macroeconomic time series, without incurring in the curse of dimensionality. For applications to nowcasting see \shortciteA{giannone2008nowcasting}, \shortciteA{Altissimo2010new} and \shortciteA{banbura2013now}, among others.

This paper considers the problem of constructing monthly indicators of the medium-to-long-run (M2LR) component of GDP growth at different horizons, namely quarter-on-quarter and year-on-year growth.  The reference to the  M2LR  component of economic growth is meant to exclude high-frequency uninteresting variation, consisting of fluctuations with periodicity less than or equal to one year.

In the unrealistic setting in which a doubly infinite sample of GDP growth measure is available at the monthly frequency, the assessment of the M2LR component of economic growth can be made by the ideal band-pass filter.  The difficulty lies not only in the fact that monthly GDP is not observed; an  obvious limitation of the ideal  band-pass filter is its two-sided nature; more precisely, it is an  infinite, two-sided, symmetric moving average of the time series. \shortciteA{baxter1999measuring}, who popularized band-pass filtering in macroeconomics,  derive an approximation to the ideal band-pass filter with finite impulse response, which however leaves unaddressed estimation in real time. \shortciteA{christiano2003band} have addressed this problem by proposing forecast extensions, i.e., by replacing the missing future observations with conditional forecasts.  This paper argues that forecasts extensions are not needed, and that the relevant information concerning the M2LR component of economic growth can be distilled in real time from a large cross-section of macroeconomic time series, by extracting smooth common factors and projecting GDP growth on the space spanned by them.

We build our approach on \shortciteA{Altissimo2010new}, who made an essential contribution to the above research question through the development of the New EuroCoin (NE) method. NE is a real-time estimate of the euro area GDP growth, cleaned of short-run oscillations. More precisely the ideal target for NE is the component of the GDP growth rate obtained by removing the fluctuations of a period shorter than or equal to 1 year. NE avoids the end-of-sample bias typical of two-sided filters by projecting the target onto suitable linear combinations of a large set of monthly variables. Such linear combinations are designed to discard irrelevant information (idiosyncratic and high-frequency noise) and retain relevant information (common, cyclical and long-run waves).

Against this background, this paper makes several contributions to the above literature, which can be regarded as a development and a refinement of \shortciteA{Altissimo2010new}. From the methodological standpoint, we derive the estimators of the common component and the common factors from best linear prediction principles. In particular, we show that in a finite cross-section  the minimum mean square linear estimators of the components depend on data via a finite number of cross-sectional averages of the original series, represented by the generalized principal components of the respective covariance matrices.  As the dimension of the cross-section diverges, the estimation error vanishes.
The new results enable to clarify some aspects of the theory, among which the relation between the standard \shortcite{stock2002forecasting} and generalized principal components \shortcite[Section 3.3]{forni2005generalized}. 

Secondly, while NE focuses only on quarterly growth horizon, the methodology is extended so as to consider all possible horizons for growth; in particular, we think that making available a measure of underlying growth at an annual rate is relevant for practitioners.

From the empirical standpoint, we provide a novel application dealing with nowcasting the M2LR component of U.S. GDP, using the popular FRED-MD dataset \shortcite{mccracken2016fred}, consisting of 122 monthly macroeconomic time series, for estimating  the smooth common factors. The outcome is a monthly indicator of underlying growth
that we label US COIN (U.S. COincident INdicator).
We then address the question of evaluating our indicator, by assessing the accuracy of the nowcasts of the M2LR component of GDP growth. This is not straightforward, as the target measure is not observable. Thus, we provide a discussion concerning the construction of an \emph{oracle} measure that we elect as the nowcast target.
A comparative validation of US COIN vis-a-vis alternative estimation strategies is then performed, on the basis of a pseudo-real time nowcasting exercise.
Finally, we propose a bootstrap procedure for evaluating the finite sample estimation uncertainty of the indicator.

The plan of the paper is the following. The next section reviews the specification of the dynamic factor model and outlines the nowcasting problem.  Section \ref{sec:mmsle} deals with optimal linear estimation of the common component and its M2LR component, and establishes the consistency of the estimator. In section \ref{sec:usacoin} the smooth factors are used to construct the indicator of M2LR growth at the monthly frequency. The estimation of the spectral density of the series and of the components is considered in section \ref{sec:est}, where we propose a parametric bootstrap method for assessing the finite-sample estimation error uncertainty.
Our US COIN indicator is presented in section \ref{sec:predval}, which provides details on the dataset and the construction of the benchmark measure of M2LR growth used to assess the methodology.
Finally, in section \ref{sec:concl} we draw our conclusions.


\section{The Dynamic Factor Model \label{sec:dfm}}

Let $\{y_t, t\in \mathbb{Z}\}$ denote the logarithm of GDP in month $t$ and let $\Delta y_t = y_t-y_{t-1}$ denote its monthly growth rate. We do observe  neither $y_t$  nor $\Delta y_t$, but quarterly GDP, $\{Y_\tau, \tau = 1, 2, \ldots\}$, is available every third month,  at the end of times $3\tau$, $\tau = 1, 2, \ldots$. Its logarithm, $y_\tau = \ln Y_\tau$, is an approximation to the sum of the logarithms of three consecutive months,  $y_\tau \simeq y_{3\tau}+y_{3\tau-1}+y_{3\tau-2}$, where the left hand side is a systematic sample of the process $(1+L+L^2)y_t$, so that every third value is observed. Here $L$ denotes the lag operator, $L^ky_t=y_{t-k}$, and $\Delta = 1-L$.

The rationale is that the M2LR component of $\Delta y_t$ can be retrieved from  its projection on a fixed, but unknown, number of smooth common factors, extracted from a high-dimensional cross-section of monthly time series. The factors are the drivers of the M2LR component of an $n$-dimensional time series
$\{\bx_{t}, t\in \mathbb{Z}\}$, which is assumed to  be  covariance stationary for any $n$.
The projection raises a problem of temporal disaggregation, as   we observe the logarithmic change
$$\begin{array}{lll}
g_\tau &=& y_{\tau}-y_{\tau-1} \\
&\simeq& (y_{3\tau}+y_{3\tau-1}+y_{3\tau-2})-(y_{3\tau-3}+y_{3\tau-4}+y_{3\tau-5}), \\
\end{array}$$
where the left hand side is a systematic sample of $(1+L+L^2)^2 \Delta y_t$.

The purpose of this section is to expose the methodology superintending the construction of the indicator of the M2LR component  of GDP growth. The next subsection deals with the  dynamic factor model formulated at the monthly frequency for the vector $\bx_t$, whose elements, without loss of generality are standardized so that $\E(x_{it})=0$ and $\Var(x_{it})=1,$ where $x_{it}$ is the $i$-th element of the vector $\bx_t$, $i= 1, \ldots, n$.

Further, we denote the cross-covariance function of $\bx_t$ by $\bGamma_k^x = \E(\bx_{t}\bx_{t-k}')$, and the spectral density matrix by   $\bSigma^x(\theta)=\frac{1}{2\pi}\sum_{k=-\infty}^\infty \bGamma_k^x e^{-\imath \theta k}$, where $\theta \in [-\pi, \pi]$ is the frequency in radians. 
\subsection{Specification}

The approximate dynamic factor model  represents $\bx_t$ as the sum of two orthogonal components,
\begin{equation}\label{eq:dfm}
\bx_t = \bchi_t+\bxi_t,
\end{equation}
where  $\bchi_t$ and $\bxi_t$ denote respectively the $n$ dimensional common and idiosyncratic components, with  $\E(\bchi_t)=\0$, $\E(\bxi_t)=\0$, $\E(\bchi_{t}\bchi_{t-k}') = \bGamma_k^\chi$, $\E(\bxi_{t}\bxi_{t-k}') = \bGamma_k^\xi$ and $ \E(\bchi_{t}\bxi_s') =\0, \forall t,s \in \mathbb{Z}$. The idiosyncratic processes  $\{\xi_{it}\}$ are weakly dependent across the cross-sectional dimension. The above representation implies the following additive decomposition of the cross-covariance function, $\bGamma_k^x = \bGamma_k^\chi+\bGamma_k^\xi$. As for the spectral density matrix:
$\bSigma^x(\theta)=\bSigma^\chi(\theta)+\bSigma^\xi(\theta),$
where we have defined $\bSigma^x(\theta)=\frac{1}{2\pi}\sum_{k=-\infty}^\infty \bGamma_k^x e^{-\imath \theta k}$,  $\bSigma^\chi(\theta)=\frac{1}{2\pi}\sum_{k=-\infty}^\infty \bGamma_k^\chi e^{-\imath \theta k}$ and $\bSigma^\xi(\theta)=\frac{1}{2\pi}\sum_{k=-\infty}^\infty \bGamma_k^\xi e^{-\imath \theta k}$.

Let $\lambda_j^x(\theta), z=x,\chi, \xi,$  denote the  $j$-th largest dynamic eigenvalue of $\bSigma^x(\theta)$ and let  $\bp_j^x(\theta)$ be the corrisponding eigenvectors, such that  $\bSigma^x(\theta)\bp_j^x(\theta)=\lambda_j^x(\theta)\bp_j^x(\theta)$, $j=1,\ldots, n$, and the eigendecomposition
$\bSigma^x(\theta) = \sum_{j=1}^n \lambda_j^x(\theta)\bp_j^x(\theta)\bar{\bp}_j^x(\theta)$  holds, where   $\bar{\bp}_j^{x}$ is the complex conjugate transpose of $\bp_j^x$.
Similarly, consider the eigen-decompositions  $\bSigma^\chi(\theta) = \sum_{j=1}^n \lambda_j^\chi(\theta)\bp_j^\chi(\theta)\bar{\bp}_j^\chi(\theta)$,   $\bSigma^\xi(\theta) = \sum_{j=1}^n \lambda_j^\xi(\theta)\bp_j^\xi(\theta)\bar{\bp}_j^\xi(\theta)$.

We also denote with  $\mu_j^x$, $\mu_j^\chi$ and $\mu_j^\xi$ the $j$-th largest eigenvalue of $\bGamma_0^{x}$ $\bGamma_0^{\chi}$ and $\bGamma_0^{\xi}$, respectively, and define $\bS = \left[ \bs_1, \ldots, \bs_j, \ldots, \bs_n\right]$ the $n\times n$ orthogonal matrix whose columns are the eigenvectors of $\bGamma_0^x$, $\bGamma_0^x\bs_j = \mu_j^x \bs_j$. With obvious notation the eigendecomposition of the covariance matrix of $\bx_t$  is $\bGamma_0^x = \bS \bM_x \bS'$, $\bM_x = \diag\{\mu_{1}^x, \ldots, \mu_{n}^x\}$.
Finally, denote  $\bM_\chi = \diag\{\mu_{1}^\chi, \ldots, \mu_{n}^\chi\}$.

As in \shortciteA{forni2005generalized} we make the following assumption.
\begin{assumption}\label{ass:1}
The common component, $\bchi_t$, is characterized by strong cross-sectional dependence, and  the idiosyncratic component, $\bxi_t$, by weak cross-sectional dependence in the sense specified below:
\begin{description}
\item[a)] $\lambda_q^\chi(\theta)\rightarrow \infty $ as $n\rightarrow \infty$, $\theta$-almost everywhere (a.e.) in $[-\pi,\pi]$
\item[b)] $\lambda_{j+1}^\chi(\theta)<\lambda_j^\chi(\theta)$, $j = 1, \ldots, q$, $\theta$-a.e. in $[-\pi,\pi]$.
\item[c)] The eigenvalues of $\bSigma^\xi(\theta)$ are uniformly bounded, i.e., $\lambda_{ 1}^\xi(\theta)\leq K$, for positive real  $K$, for all $n\in \mathbb{N}$ and for all $\theta \in [-\pi,\pi]$ .
\item[d)] $\mu_{j+1}^\chi< \mu_{j}^\chi, j = 1, \ldots, r$, and $\mu_r^\chi\rightarrow \infty$ as $n\rightarrow \infty$.
\item[e)] $0<\mu_{n}^\xi<K<\infty$ as $n\rightarrow \infty$.
\end{description}
\end{assumption}
Assumptions a) and c) identify the common component as a dynamic linear combination of $q$ common factors, where $q$ is the number of diverging eigenvalues, and the idiosyncratic component, which is both serially and contemporaneously weakly dependent. Assumption b) requires that the first $q+1$ eigenvalues are distinct. The space spanned by the common component has finite dimension $r$.

Under Assumption \ref{ass:1} it can be shown \shortcite{forni2000generalized,forni2005generalized,forni2009opening} that $\bx_t$ has a static factor model representation,  driven by $r$ common factors. This follows from the representation of the common component,
$\bchi_t = \bB(L)\bu_t$, where $\bB(L) = \bL [\bD(L)]^{-1}\bK$, $\bL$ is an $n\times r$ matrix of factor loadings, $\bD(L) = \bI - \sum_{j=1}^p \bD_j L^j$ is an $r \times r$ matrix lag polinomial with roots outside the unit circle, $\bK$ is $r\times q$, and $\bu_t\sim \WN(\0, \bI_q)$ are the common shocks, also referred to as the dynamic factors.

\section{Minimum mean square linear estimation of the low-pass component of economic growth \label{sec:mmsle}}

\subsection{Estimation of the common component}

The following proposition derives the minimum mean square linear estimator of the common component based on a finite cross-section.

\begin{proposition}\label{prop:1}
Let $\bQ_\chi = \bM_x^{-\frac{1}{2}}\bS'\bGamma_0^\chi\bS\bM_x^{-\frac{1}{2}}$ have eigendecomposition
$\bQ_\chi = \sum_{j=1}^n \mu_j^{\chi *} \bz_j^* \bz_j^{*\prime}$ and let
$\bZ_\chi^* = [\bz_1^*, \ldots, \bz_r^*]$.
The finite $n$, rank $r$, optimal linear estimator of the common component,  $ \tilde{\bchi}_t$, based on $\bx_t$ is
\begin{equation}
\tilde{\bchi}_t = \bGamma_0^x\bZ_\chi \bM^*_\chi \bZ_\chi'\bx_t,
\label{eq:chihat_all}
\end{equation}
where $\bZ_\chi = \bS\bM_x^{-\frac{1}{2}}  \bZ_\chi^*$ and  $\bM_{\chi *} = \diag\{\mu_{1}^{\chi *}, \ldots, \mu_{r}^{\chi *}\}$.
The linear combinations $\bZ_\chi'\bx_t$ are the $r$ generalized principal components \shortcite{forni2005generalized}.

The mean square estimation error is
\begin{equation}
\E\{(\bchi_t-\tilde{\bchi}_t)(\bchi_t-\tilde{\bchi}_t)' \}= \bGamma_0^\chi-\bGamma_0^\chi \bZ_\chi\bZ_\chi'\bGamma_0^\chi.
\label{eq:msechi}
\end{equation}
\end{proposition}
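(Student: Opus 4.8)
The plan is to show that the estimator in the statement is nothing but the unrestricted minimum mean square linear estimator (MMSLE) of $\bchi_t$ given $\bx_t$, rewritten through the generalized principal components, and that the rank-$r$ qualification is automatic. First I would record the best linear prediction ingredients: $\Cov(\bchi_t,\bx_t)=\E\{\bchi_t(\bchi_t+\bxi_t)'\}=\bGamma_0^\chi$ by the orthogonality $\E(\bchi_t\bxi_s')=\0$, while $\Var(\bx_t)=\bGamma_0^x=\bGamma_0^\chi+\bGamma_0^\xi$ is nonsingular because $\bGamma_0^\xi$ is positive definite (Assumption~\ref{ass:1}e). Hence the MMSLE of $\bchi_t$ on $\bx_t$ is $\bGamma_0^\chi(\bGamma_0^x)^{-1}\bx_t$, with error covariance $\bGamma_0^\chi-\bGamma_0^\chi(\bGamma_0^x)^{-1}\bGamma_0^\chi$, and by the projection property it minimizes this error covariance, in the positive semidefinite order, over all linear estimators. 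The static representation $\bchi_t=\bL[\bD(L)]^{-1}\bK\bu_t$ forces $\bGamma_0^\chi$ to have rank $r$, so $\bGamma_0^\chi(\bGamma_0^x)^{-1}$ has rank at most $r$: the rank restriction does not bind, and the MMSLE is a fortiori optimal within the rank-$r$ class.

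The substance of the proof is then the algebraic identity
\[
\bGamma_0^\chi(\bGamma_0^x)^{-1}=\bGamma_0^\chi\bZ_\chi\bZ_\chi'=\bGamma_0^x\bZ_\chi\bM_{\chi *}\bZ_\chi',
\]
which exhibits the estimator as a function of the $r$ aggregates $\bZ_\chi'\bx_t$ only. I would pass to the whitened coordinates induced by $\bGamma_0^x=\bS\bM_x\bS'$, writing $\bT=\bM_x^{-\frac{1}{2}}\bS'$ so that $\bT\bGamma_0^x\bT'=\bI_n$, $(\bGamma_0^x)^{-1}=\bT'\bT$, and $\bQ_\chi=\bT\bGamma_0^\chi\bT'$. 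Because $\bT$ is nonsingular, $\bQ_\chi$ is positive semidefinite of rank $r$, and with $\bZ_\chi=\bT'\bZ_\chi^*$ one checks at once that $\bZ_\chi'\bGamma_0^x\bZ_\chi=\bZ_\chi^{*\prime}\bZ_\chi^*=\bI_r$ and $\bZ_\chi\bZ_\chi'=\bT'\bP^*\bT$, where $\bP^*=\bZ_\chi^*\bZ_\chi^{*\prime}$ is the orthogonal projector onto $\mathrm{range}(\bQ_\chi)$.

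The key step is that pre-multiplication by $\bGamma_0^\chi$ cannot distinguish $\bZ_\chi\bZ_\chi'=\bT'\bP^*\bT$ from $(\bGamma_0^x)^{-1}=\bT'\bT$. Indeed, any $\bv\in\ker(\bQ_\chi)$ satisfies $\bT\bGamma_0^\chi\bT'\bv=\0$, whence $\bGamma_0^\chi\bT'\bv=\0$ since $\bT$ is nonsingular; as the columns of $\bI-\bP^*$ lie in $\ker(\bQ_\chi)$, this gives $\bGamma_0^\chi\bT'(\bI-\bP^*)=\0$, i.e. $\bGamma_0^\chi\bT'\bP^*\bT=\bGamma_0^\chi\bT'\bT$, which is the first equality. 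The second equality follows from the eigenrelation $\bQ_\chi\bZ_\chi^*=\bZ_\chi^*\bM_{\chi *}$: pre-multiplying by $\bS\bM_x^{\frac{1}{2}}$ and using $\bZ_\chi=\bS\bM_x^{-\frac{1}{2}}\bZ_\chi^*$ yields $\bGamma_0^\chi\bZ_\chi=\bGamma_0^x\bZ_\chi\bM_{\chi *}$, establishing the representation \eqref{eq:chihat_all}. Finally, inserting $\bGamma_0^\chi(\bGamma_0^x)^{-1}=\bGamma_0^\chi\bZ_\chi\bZ_\chi'$ into the error covariance $\bGamma_0^\chi-\bGamma_0^\chi(\bGamma_0^x)^{-1}\bGamma_0^\chi$ produces $\bGamma_0^\chi-\bGamma_0^\chi\bZ_\chi\bZ_\chi'\bGamma_0^\chi$, which is \eqref{eq:msechi}. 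I expect the main obstacle to be precisely this rank-and-kernel bookkeeping: the fact that $\bZ_\chi\bZ_\chi'$ is only a rank-$r$ projector, not $(\bGamma_0^x)^{-1}$, yet coincides with it once multiplied by the rank-$r$ matrix $\bGamma_0^\chi$; everything else is routine best-linear-prediction algebra.
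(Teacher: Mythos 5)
Your proof is correct, and it rests on the same machinery as the paper's: the whitening transformation $\bT=\bM_x^{-\frac{1}{2}}\bS'$, the eigendecomposition of $\bQ_\chi=\bT\bGamma_0^\chi\bT'$, and the identity $\bGamma_0^\chi\bZ_\chi=\bGamma_0^x\bZ_\chi\bM_{\chi*}$, which is exactly what the paper invokes for the mean square error formula. The organization differs in one substantive respect. The paper works in the whitened coordinates, forms the unrestricted predictor $\bQ_\chi\bx_t^*$, and then takes its \emph{best rank-$r$ approximation} $\bZ_\chi^*\bM_\chi^*\bZ_\chi^{*\prime}\bx_t^*$ by truncating the eigendecomposition, before mapping back; it never writes the estimator as $\bGamma_0^\chi(\bGamma_0^x)^{-1}\bx_t$. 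You instead start from the unrestricted MMSLE in the original coordinates and argue that the rank-$r$ constraint is vacuous because the static representation forces $\mathrm{rank}(\bGamma_0^\chi)\leq r$, so that the Eckart--Young truncation is exact; the kernel/projector argument $\bGamma_0^\chi\bT'(\bI-\bP^*)=\0$ then delivers $\bGamma_0^\chi(\bGamma_0^x)^{-1}=\bGamma_0^\chi\bZ_\chi\bZ_\chi'$. This buys two things the paper leaves implicit: it explains \emph{why} truncating the matrix $\bQ_\chi$ (a Frobenius-norm approximation of the predictor matrix) also yields the optimal rank-$r$ \emph{predictor} (here the two criteria coincide only because $\Var(\bx_t^*)=\bI_n$, or, in your version, because the optimum is already rank $r$), and it makes transparent that $\bZ_\chi\bZ_\chi'$ acts as a generalized inverse of $\bGamma_0^x$ on the range of $\bGamma_0^\chi$, which is the fact the paper later uses when comparing with the PCA estimator. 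The only caveat is that your argument leans on $\bGamma_0^\chi$ having rank exactly $r$ at finite $n$ (guaranteed for $n$ large by Assumption 1(d)); the paper's truncation formulation is agnostic on this point.
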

\begin{proof}
See Appendix \ref{app:prop1}
\end{proof}
Note that minimizing the mean square error matrix means minimizing its operator  norm: if $\bA$ is an $n\times m$ matrix, $\|\bA\|_\Gamma$ is the largest singular value of $(\bGamma_0^{x})^{-1/2'}\bA$.
The columns of the matrix $\bZ_\chi$ are formed by the first $r$  generalized eigenvectors of the two matrices  $\bGamma_0^\chi$ and $\bGamma_0^x$, satisfying
$\bZ_{\chi}'\bGamma_0^\chi\bZ_\chi = \bM_\chi^{*}$ and $\bZ_{\chi}'\bGamma_0^x\bZ_\chi = \bI_r$.  Equation (\ref{eq:chihat_all}) is equivalently written $\tilde{\bchi}_t = \bGamma_0^\chi\bZ_\chi \bZ_\chi'\bx_t$.

The next proposition shows the consistency, as $n\rightarrow\infty$, of the estimator of the common component.
\begin{proposition}\label{prop:2}
$\tilde{\bchi}_t\rightarrow_P \bchi_t$, as $n\rightarrow \infty$    and
$$\lim\limits_{n\rightarrow \infty} \tilde{\bchi}_t = \bGamma_0^x\bZ_\chi  \bZ_\chi'\bx_t.$$
\end{proposition}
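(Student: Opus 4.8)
The plan is to reduce both assertions to a single spectral fact: the diagonal matrix $\bM_\chi^*=\diag\{\mu_1^{\chi *},\ldots,\mu_r^{\chi *}\}$ converges to $\bI_r$ as $n\to\infty$. First I would record the algebraic identities that the construction of $\bZ_\chi$ forces. Since $\bGamma_0^x=\bGamma_0^\chi+\bGamma_0^\xi$ and the columns of $\bZ_\chi$ satisfy $\bZ_\chi'\bGamma_0^x\bZ_\chi=\bI_r$ and $\bZ_\chi'\bGamma_0^\chi\bZ_\chi=\bM_\chi^*$, subtracting gives $\bZ_\chi'\bGamma_0^\xi\bZ_\chi=\bI_r-\bM_\chi^*$; thus controlling $\bM_\chi^*$ is the same as showing that the idiosyncratic variance captured along the generalized-principal-component directions vanishes. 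I would also note that the $\mu_j^{\chi *}$ are exactly the $r$ largest generalized eigenvalues of the pencil $(\bGamma_0^\chi,\bGamma_0^x)$, since $\bQ_\chi$ is similar to that pencil, so each has the Rayleigh-quotient representation $\mu_j^{\chi *}=v'\bGamma_0^\chi v/(v'\bGamma_0^x v)\le 1$ at the corresponding generalized eigenvector.

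The engine of the proof, and the step I expect to be the main obstacle, is the lower bound $\mu_r^{\chi *}\to 1$. I would obtain it from the Courant--Fischer min--max characterization
\[
\mu_r^{\chi *}=\max_{\dim V=r}\ \min_{0\ne v\in V}\frac{v'\bGamma_0^\chi v}{v'\bGamma_0^x v},
\]
choosing as test space $V$ the span of the top $r$ eigenvectors of $\bGamma_0^\chi$. For $v\in V$ one has $v'\bGamma_0^\chi v\ge \mu_r^\chi\|v\|^2$, while $v'\bGamma_0^x v=v'\bGamma_0^\chi v+v'\bGamma_0^\xi v\le v'\bGamma_0^\chi v+\mu_1^\xi\|v\|^2$, whence $\mu_r^{\chi *}\ge (1+\mu_1^\xi/\mu_r^\chi)^{-1}$. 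Here $\mu_r^\chi\to\infty$ by Assumption~\ref{ass:1}(d), while $\mu_1^\xi=\lambda_{\max}(\bGamma_0^\xi)\le\int_{-\pi}^{\pi}\lambda_1^\xi(\theta)\,d\theta\le 2\pi K$ is bounded uniformly in $n$ by Assumption~\ref{ass:1}(c) together with $\bGamma_0^\xi=\int_{-\pi}^{\pi}\bSigma^\xi(\theta)\,d\theta$. Combined with $\mu_j^{\chi *}\le 1$ and the ordering $\mu_1^{\chi *}\ge\cdots\ge\mu_r^{\chi *}$, this forces $\mu_j^{\chi *}\to 1$ for every $j\le r$, i.e. $\bM_\chi^*\to\bI_r$ and $\bZ_\chi'\bGamma_0^\xi\bZ_\chi\to\0$. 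The delicate point is that $\bZ_\chi$, $V$ and all eigenvalues depend on $n$, so the bounds must be uniform in $n$; the uniform boundedness of $\mu_1^\xi$ and the divergence of $\mu_r^\chi$ are exactly what make the single ratio $\mu_1^\xi/\mu_r^\chi$ the only $n$-dependent quantity that matters.

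Given $\bM_\chi^*\to\bI_r$, the second assertion is immediate. Writing the estimator in its $\bGamma_0^x$ form, the gap between $\tilde{\bchi}_t=\bGamma_0^x\bZ_\chi\bM_\chi^*\bZ_\chi'\bx_t$ and the claimed limit is $\bGamma_0^x\bZ_\chi(\bM_\chi^*-\bI_r)\bZ_\chi'\bx_t$. Its $i$-th coordinate has variance $\be_i'\bGamma_0^x\bZ_\chi(\bM_\chi^*-\bI_r)^2\bZ_\chi'\bGamma_0^x\be_i\le (1-\mu_r^{\chi *})^2\,\be_i'\bGamma_0^x\bZ_\chi\bZ_\chi'\bGamma_0^x\be_i$, and since $\bZ_\chi'\bGamma_0^x\bZ_\chi=\bI_r$ gives $\bZ_\chi\bZ_\chi'\preceq(\bGamma_0^x)^{-1}$ (an orthogonal-projection bound after the substitution $(\bGamma_0^x)^{1/2}\bZ_\chi$), that variance is at most $(1-\mu_r^{\chi *})^2\Var(x_{it})=(1-\mu_r^{\chi *})^2\to 0$. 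Hence the finite-$n$ estimator and $\bGamma_0^x\bZ_\chi\bZ_\chi'\bx_t$ coincide in the limit, coordinate by coordinate.

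Finally, for the consistency statement itself I would exploit that $\bGamma_0^\chi$ has rank $r$ under Assumption~\ref{ass:1}(d), factor $\bGamma_0^\chi=\bW\bW'$ with $\bW$ of size $n\times r$, and write $\bchi_t=\bW\bef_t$ with $\Var(\bef_t)=\bI_r$. Setting $\bH=\bW'\bZ_\chi$ one checks $\bH'\bH=\bZ_\chi'\bGamma_0^\chi\bZ_\chi=\bM_\chi^*$, and the error collapses to $\bchi_t-\tilde{\bchi}_t=\bW\,\ba_t$ with $\ba_t=(\bI_r-\bH\bH')\bef_t-\bH\bZ_\chi'\bxi_t$. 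Because $\bchi_t$ and $\bxi_t$ are orthogonal, the cross term drops and $\E(\ba_t\ba_t')=(\bI_r-\bH\bH')^2+\bH(\bI_r-\bM_\chi^*)\bH'$, whose spectral norm is bounded by $(1-\mu_r^{\chi *})^2+\mu_1^{\chi *}(1-\mu_r^{\chi *})\to 0$ using $\bH'\bH=\bM_\chi^*\to\bI_r$ (which also bounds $\|\bH\|$). The $i$-th diagonal of the mean square error \eqref{eq:msechi} is then $\be_i'\bW\,\E(\ba_t\ba_t')\,\bW'\be_i\le\lambda_{\max}\!\big(\E(\ba_t\ba_t')\big)\,\be_i'\bGamma_0^\chi\be_i$, and $\be_i'\bGamma_0^\chi\be_i=\Var(\chi_{it})\le 1$, so each coordinate's mean square error tends to zero. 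Mean-square convergence gives $\tilde\chi_{it}\to_P\chi_{it}$ for every fixed $i$, which is the asserted $\tilde{\bchi}_t\to_P\bchi_t$.
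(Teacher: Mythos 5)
Your proof is correct, but it reaches the key lemma by a genuinely different route than the paper. Both arguments hinge on showing $\bM_\chi^{*}\rightarrow \bI_r$ and then reading off the convergence of the mean square error. The paper gets the eigenvalue bound by combining Weyl's inequality ($\mu_j^x-\mu_1^\xi\leq\mu_j^\chi\leq\mu_j^x$) with a majorization result from Marshall--Olkin to conclude $\mu_j^{\chi*}\geq \mu_j^\chi/\mu_j^x\rightarrow 1$; you instead use the Courant--Fischer min--max characterization of the generalized eigenvalues of the pencil $(\bGamma_0^\chi,\bGamma_0^x)$ with the span of the top $r$ eigenvectors of $\bGamma_0^\chi$ as an explicit test space, which yields the self-contained and fully uniform bound $\mu_r^{\chi*}\geq (1+\mu_1^\xi/\mu_r^\chi)^{-1}$ from Assumptions 1(c)--(d) alone. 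Your approach is more elementary (no external majorization reference) and, incidentally, avoids the slightly garbled inequality chain in the paper's version. On the second half you are also more thorough: the paper only verifies that the projected error $\bZ_\chi'\E\{(\bchi_t-\tilde{\bchi}_t)(\bchi_t-\tilde{\bchi}_t)'\}\bZ_\chi=(\bI_r-\bM_\chi^*)\bM_\chi^*$ vanishes, relegating the orthogonal directions to a parenthetical remark, whereas your factorization $\bGamma_0^\chi=\bW\bW'$, $\bchi_t-\tilde{\bchi}_t=\bW\ba_t$ with $\E(\ba_t\ba_t')=(\bI_r-\bH\bH')^2+\bH(\bI_r-\bM_\chi^*)\bH'$ bounds every diagonal entry of the full MSE matrix and delivers the coordinatewise $L^2$ (hence in-probability) convergence explicitly, and you additionally supply the variance bound $\bGamma_0^x\bZ_\chi\bZ_\chi'\bGamma_0^x\preceq\bGamma_0^x$ needed for the second displayed limit, which the paper treats as immediate. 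The only caveats are cosmetic: your reuse of $\bef_t$ for the static factors clashes with the paper's $\bef_t=\bZ_\chi'\bx_t$, and the rank-$r$ factorization of $\bGamma_0^\chi$ relies on the static representation $\bchi_t=\bL\bF_t$ rather than on Assumption 1(d) per se; neither affects the validity of the argument.
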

\begin{proof}
See Appendix \ref{app:prop2}
\end{proof}
The asymptotic approximation of the estimator of the common component can be written
\begin{equation}
\begin{array}{lll}
\tilde{\bchi}_t &=& \bA_\chi \bef_t\\
&=& \bP_\chi\bx_t
\end{array}\label{eq:chihat}
\end{equation}
where $\bef_t=\bZ_\chi' \bx_t$ are the generalized principal components, and we denoted the loadings matrix $\bA_{\chi} = \bGamma_0^x \bZ_\chi$, while $\bP_{\chi} = \bGamma_0^x \bZ_\chi  \bZ_{\chi}'$.
The space of the common components is spanned by the $r$  columns of $\bA_\chi=\bGamma_0^x \bZ_\chi.$ The vectors are orthonormal in $\mathcal{M}_x^n$, the $n$-dimensional Hilbert space endowed with the inner product $ \|\bx\|_\Gamma  = (\bx'[\bGamma_0^x]^{-1}\bx)^{1/2}$, as $\bA_\chi'[\bGamma_0^x]^{-1}\bA_\chi = \bI_r$. The null space is spanned by $\bGamma_0^x \bZ_\xi$, where $\bZ_{\xi}$ are the $n-r$ generalized eigenvectors satisfying $\bZ_{\xi}'\bGamma_0^x\bZ_\chi = \0$.

The matrix $\bP_\chi$ is the $n \times n$ projection matrix with rank $r$, projecting $\bx_t$ into the space of the common component. The matrix is idempotent, $\bP_\chi^2 = \bP_{\chi}$, and it is not symmetric. 
The $r\times n$ matrix $\bZ_\chi'$ is the weak, or generalized, inverse of $\bA_\chi$, satisfying the two properties:
(i) $\bA_\chi \bZ_\chi'\bA_\chi = \bA_{\chi}$  and (ii) $\bZ_\chi'\bA_\chi \bZ_\chi'= \bZ_\chi'$. Hence, we can write, $\bA_{\chi}^- = \bZ_{\chi}'$, $\bP_{\chi} = \bA_\chi\bA_\chi^-$. See  \shortciteA{rao1974projectors} and \shortciteA[ch. 7]{seber2008matrix}.
According to the terminology in \shortciteA{seber2008matrix}, $\bZ_\chi'$ is a minimum norm reflexive $g$-inverse. It is also a least square generalized inverse, see \shortciteA{rao1971generalized}.

It is useful to compare our estimator with the standard principal components analysis (PCA) estimator of $\bchi_t$ \shortcite{stock2002forecasting, stock2002macroeconomic}.
The latter arises from the projecting of $\bx_t$ on the space of the common component in a metric space endowed with the inner product $ \|\bx\|  = (\bx'\bx)^{1/2}$.
Let $\bS_r$ denote the $n\times r$ matrix whose columns are the $r$ eigenvectors of $\bGamma_0^x$ corresponding to the largest eigenvalues. The PCA
estimator is $\hat{\bchi_t}^{_{PCA}} = \bS_r\bS_r'\bx_t$. Denoting $\bM_r = \diag(\mu_1, \ldots, \mu_r)$,
$$\begin{array}{lll}\Var(\hat{\bchi_t}^{_{PCA}} )&=& \bS_r\bM_r\bS_r' \\
&=& \bGamma_0^x\bS_r\bM_r^{-1}\bS_r'\bGamma_0^x,\end{array}$$
where the second expression is for comparison with $$\Var(\tilde{\bchi_t}) = \bGamma_0^x \bZ_\chi  \bM_\chi^{*2} \bZ_\chi'\bGamma_0^x.$$

The estimation mean square error matrix is
\begin{equation}
\E\{(\bchi_t-\hat{\bchi}_t^{_{PCA}})(\bchi_t-\hat{\bchi}_t^{_{PCA}})' \}= \bGamma_0^\chi(\bGamma_0^{\chi -}- \bS_r \bM_r^{-1}\bS_r')\bGamma_0^\chi,
\label{eq:msepca}
\end{equation}
which shows that $\bS_r \bM_r^{-1}\bS_r'$ is an alternative Moore-Penrose inverse of $\bGamma_0^\chi$ and that the two solutions are asymptotically equivalent.
\begin{remark}
The generalized principal component estimator (\ref{eq:chihat_all}) is the minimum mean square linear estimator in the $\mathcal{M}_x^n$ Hilbert space, in which the mean square error matrix  is defined as
$\E\left(\|\bchi_t - \tilde{\bchi}_t\|_\Gamma^2 \right)=\bGamma_0^{x-1/2'}\E\left\{\left(\bchi_t - \tilde{\bchi}_t\right)\left(\bchi_t - \tilde{\bchi}_t\right)'\right\} \bGamma_0^{x-1/2},$
whereas the standard PCA solution provides the minimum mean square linear estimator in the Hilbert space endowed with the norm $\|\bx\| = \sqrt{\bx'\bx}$.  As a result, in comparing
(\ref{eq:msechi}) and (\ref{eq:msepca}) we have that
 $\E\{(\bchi_t-\tilde{\bchi}_t)(\bchi_t-\tilde{\bchi}_t)' \}-\E\{(\bchi_t-\hat{\bchi}_t^{_{PCA}})(\bchi_t-\hat{\bchi}_t^{_{PCA}})' \}\geq 0$ and
$\E\left(\|\bchi_t - \tilde{\bchi}_t\|_\Gamma^2 \right)-\E\left(\|\bchi_t - \hat{\bchi}_t^{_{PCA}} \|_\Gamma^2 \right)\leq 0$.
Hence, the relative efficiency of the generalized PCA estimator versus the standard PCA estimator depends on the Hilbert space that is assumed.
\end{remark}

\subsection{Smooth generalized principal components: a cross\--sectional low\--pass filter \label{sec:smooth}}

The $r$ generalized principal components  distill the co-movements of the $n$ macroeconomic time series that are pervasive. However, their variability contains high frequency trigonometric components that do not contribute to the M2LR growth component of GDP.
\shortciteA{Altissimo2010new} consider the additive decomposition of the common component into a low-pass component, $\bphi_t$, a high-pass component, $\bpsi_t$,
\begin{equation}
\bchi_t = \bphi_t+ \bpsi_t, \label{eq:dec}
\end{equation}
where $\E(\bphi_t) = \0$,  and  $\E(\bphi_t\bphi_{t-k}')=\bGamma_k^\phi$. The cross-covariance function of $\bphi_t$  is defined as
$$
\bGamma_k^\phi = \sum_{k=1}^q \int_{-\theta_c}^{\theta_c} \lambda_k^x(\theta) \bp_k^x(\theta)\bar{\bp}_k^x(\theta) e^{\imath \theta k} d\theta,
$$
and  $0<\theta_c<\pi$ is the cut-off frequency, e.g., $\theta_c = \pi/6$. The high-pass component is assumed orthogonal to $\bphi_t$ and its cross-covariance function is
$\bGamma_k^\psi = \bGamma_k^\chi - \bGamma_k^\phi$.

The orthogonal decomposition in (\ref{eq:dec}) is achieved via the following orthogonal decomposition of the common shocks of the representation $\bchi_t = \bB(L)\bu_t$:
$$\bu_t = \frac{(1+L)^s}{\varphi(L)}\bet_t + \sqrt{\varsigma}\frac{(1-L)^s}{\varphi(L)}\bzeta_t, $$
where $\bet_t\sim\WN(\0,\bI_q), \bzeta_t\sim\WN(\0, \bI_q), \E(\bet_t\bzeta_t')=\0,$   $\varphi(L)$ is a scalar polynomial satisfying
$$|\varphi(z)|^2 = |1+z|^{2s}+\varsigma |1-z|^{2s}, |z|\geq 1,$$
and $\varsigma$ is related to the cutoff frequency $\theta_c$ by $\varsigma = \left(\frac{1+\cos\theta_c}{1-\cos\theta_c}\right)^s.$
See \shortciteA{proietti2008model} for details. The common shocks are therefore decomposed into a low-pass component and a high-pass one. The former has spectral density
$w_\phi(\theta)\bI_q$, where $$w_\phi(\theta) =\frac{|1+e^{-\imath \theta}|^{2s}}{|\varphi(e^{-\imath \theta})|^2},$$
This decreases monotonically from 1 to 0 as $\theta$ goes from 0 to $\pi$, and taking the value $1/2$ at the frequency $\theta_c$.
As $s\rightarrow \infty$, it tends to the ideal low-pass box-car spectrum $I(|\theta|<\theta_c)$, where $I(\cdot)$ is the indicator function.

Then, the low-pass and high-pass components of $\bchi_t$ are respectively defined as
$$\bphi_t = \bB(L) \frac{(1+L)^s}{\varphi(L)}\bet_t, \;\;\;\;\bpsi_t = \sqrt{\varsigma}\bB(L) \frac{(1-L)^s}{\varphi(L)}\bzeta_t.$$
The former would be estimated on the basis of $\bchi_t$ as $w_\phi(L) \bchi_t$ \shortcite{whittle1963prediction}; here $w_\phi(z) = \frac{|1+z|^{2s}}{|\varphi(z)|^2}$ represents  the optimal Wiener-Kolmogorov filter  \shortcite{whittle1963prediction} for estimating $\bphi_t$. An approximation to the target low-pass component could be obtained by applying the filter for $s=2$ to (\ref{eq:chihat}). This can be done by applying the state space methodology, as in \shortciteA{proietti2008model}, see also Appendix \ref{sec:mbfilt}. However, for $s$ large (e.g., larger than 6) the estimates become computationally unstable.

We hereby show that rather than applying a two sided filter to the estimated common component, we can estimate $\bpsi_t$ by cross-sectional averaging.
In the sequel we denote  the $j$-th largest eigenvalue of $\bGamma_0^\phi$ and $\bGamma_0^\psi$ respectively by  $ \mu_{j}^\phi$ and $\mu_j^{\psi}$. 
The following assumption enables the identification of $\bphi_t$ as $n\rightarrow \infty$.
\begin{assumption} \label{ass:2}
Let $0<r_\phi \leq r$  and $0<\r_{\psi}\leq r$, $r_\phi+r_\psi = r$, be such that $\mu_{r_\phi}^\phi\rightarrow \infty$ and $\mu^\psi_{r_\psi}\rightarrow \infty$ as $n\rightarrow \infty$.
\end{assumption}

The space of the low-pass component is a subspace of that of the common component. The following proposition characterizes the estimator of the low-pass component.

\begin{proposition}\label{prop:3}

Let $\bQ_\phi = \bM_x^{-\frac{1}{2}}\bS'\bGamma_0^\phi\bS\bM_x^{-\frac{1}{2}}$ have eigendecomposition
$\bQ_\phi = \sum_{j=1}^n \mu_{j}^{\phi *} \bz_{\phi j}^{*} \bz_{\phi j}^{*'}$ and let
$\bZ_\phi^* = [\bz_{\phi 1}^*, \ldots, \bz_{\phi, r_\phi}^*]$.
The finite $n$, rank $r_\phi$, optimal linear estimator of the common component,  $ \tilde{\bphi}_t$, based on $\bx_t$, is
\begin{equation}
\tilde{\bphi}_t = \bGamma_0^x\bZ_\phi \bM^*_\phi \bZ_\phi'\bx_t,
\label{eq:chihat_low}
\end{equation}
where $\bZ_\phi = \bM_x^{-\frac{1}{2}}\bS' \bZ_\phi^*$ and  $\bM_{\phi *} = \diag\{\mu_{\phi 1}^{*}, \ldots, \mu_{\phi, r_\phi}^{*}\}$.

The mean square estimation error is
\begin{equation}
\E\{(\bphi_t-\tilde{\bphi}_t)(\bphi_t-\tilde{\bphi}_t)' \}= \bGamma_0^\phi-\bGamma_0^\phi \bZ_\phi\bZ_\phi'\bGamma_0^\phi.
\label{eq:msephi}
\end{equation}
\end{proposition}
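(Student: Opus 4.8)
The statement is the exact low-pass analogue of Proposition \ref{prop:1}, so the plan is to mimic that argument with $\bphi_t$ in place of $\bchi_t$ and $r_\phi$ in place of $r$. The single structural fact that makes the transcription legitimate is the cross-covariance identity
\[
\E(\bphi_t\bx_t') = \bGamma_0^\phi .
\]
I would establish this first: since $\bx_t = \bphi_t + \bpsi_t + \bxi_t$, since $\bphi_t$ is part of the common component and hence orthogonal to $\bxi_t$, and since by construction $\bphi_t \perp \bpsi_t$ (the high-pass component is assumed orthogonal to $\bphi_t$), every cross term vanishes and $\E(\bphi_t\bx_t') = \E(\bphi_t\bphi_t') = \bGamma_0^\phi$. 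This plays the role that $\E(\bchi_t\bx_t') = \bGamma_0^\chi$ plays in Proposition \ref{prop:1}.

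Next I would set up the reduced-rank best linear prediction problem: among all $\tilde{\bphi}_t = \bW\bx_t$ with $\rank(\bW) \le r_\phi$, minimise the mean square error measured in the $\Gamma$-metric, i.e. minimise the operator norm $\|\cdot\|_\Gamma$ of the error covariance. The natural device is to whiten the regressors by setting $\bx_t^* = \bM_x^{-1/2}\bS'\bx_t$, which has $\Var(\bx_t^*) = \bM_x^{-1/2}\bS'\bGamma_0^x\bS\bM_x^{-1/2} = \bI_n$. In these coordinates the cross-covariance of the target with the regressors is $\E(\bphi_t\bx_t^{*\prime}) = \bGamma_0^\phi\bS\bM_x^{-1/2}$, and the ``explained'' covariance that the estimator can reproduce is governed precisely by $\bQ_\phi = \bM_x^{-1/2}\bS'\bGamma_0^\phi\bS\bM_x^{-1/2}$. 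The unconstrained optimum regresses $\bphi_t$ on $\bx_t^*$ with identity normalisation, and the rank-$r_\phi$ constraint then amounts to retaining only the directions of $\bx_t^*$ along which the most signal variance is concentrated.

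The core step is therefore a spectral / Eckart--Young argument in the whitened coordinates: the best rank-$r_\phi$ estimator keeps the $r_\phi$ leading eigenvectors $\bz_{\phi 1}^*, \ldots, \bz_{\phi, r_\phi}^*$ of $\bQ_\phi$, with the associated eigenvalues $\mu_j^{\phi *}$ quantifying the variance recovered. Mapping these back through the whitening transform produces the matrix $\bZ_\phi$ of the statement, and one checks directly that it simultaneously diagonalises the two matrices in play, $\bZ_\phi'\bGamma_0^\phi\bZ_\phi = \bM_\phi^*$ and $\bZ_\phi'\bGamma_0^x\bZ_\phi = \bI_{r_\phi}$, exactly as $\bZ_\chi$ does in Proposition \ref{prop:1}. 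Substituting these normalisations reproduces the stated form \eqref{eq:chihat_low} and shows, as in the remark following Proposition \ref{prop:1} (using the generalised eigenrelation $\bGamma_0^\phi\bZ_\phi = \bGamma_0^x\bZ_\phi\bM_\phi^*$), that it can be rewritten $\tilde{\bphi}_t = \bGamma_0^\phi\bZ_\phi\bZ_\phi'\bx_t$.

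Finally I would obtain the mean square error \eqref{eq:msephi} from the orthogonality principle. Because the optimal estimator satisfies $\E\{(\bphi_t - \tilde{\bphi}_t)\tilde{\bphi}_t'\} = \0$, the error covariance collapses to $\E\{(\bphi_t - \tilde{\bphi}_t)\bphi_t'\} = \bGamma_0^\phi - \E(\tilde{\bphi}_t\bphi_t')$; using $\tilde{\bphi}_t = \bGamma_0^\phi\bZ_\phi\bZ_\phi'\bx_t$ together with $\E(\bx_t\bphi_t') = \bGamma_0^\phi$ gives $\E(\tilde{\bphi}_t\bphi_t') = \bGamma_0^\phi\bZ_\phi\bZ_\phi'\bGamma_0^\phi$, which is \eqref{eq:msephi}. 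The one place demanding genuine care --- and the main obstacle --- is the reduced-rank optimality step: one must verify that minimising the operator norm of the error covariance in the non-standard $\Gamma$-metric still selects the top-$r_\phi$ eigenspace of $\bQ_\phi$, so that the whitening really does convert the generalised eigenvalue problem for the pair $(\bGamma_0^\phi, \bGamma_0^x)$ into an ordinary symmetric eigenvalue problem for $\bQ_\phi$. Everything else is the bookkeeping already carried out for Proposition \ref{prop:1}.
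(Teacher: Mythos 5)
Your proposal is correct and follows essentially the same route as the paper: the paper's proof also whitens via $\bx_t^* = \bM_x^{-1/2}\bS'\bx_t$, uses the orthogonal decomposition $\bx_t^* = \bphi_t^* + \bpsi_t^* + \bxi_t^*$ to get $\hat{\bphi}_t^* = \bQ_\phi\bx_t^*$ as the unconstrained predictor, and then imports the rank-truncation and back-transformation steps verbatim from Proposition \ref{prop:1}. Your explicit verification of $\E(\bphi_t\bx_t') = \bGamma_0^\phi$ and your flagging of the reduced-rank optimality in the $\Gamma$-metric simply make explicit what the paper leaves implicit.
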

\begin{proof}
See Appendix \ref{app:prop3}.
\end{proof}
The linear combinations $\bef_{\phi t} = \bZ_\phi'\bx_t$ are the $r_\phi$ smooth generalized principal components \shortcite{Altissimo2010new}.
Notice that the solution can be equivalently written $\tilde{\bphi}_t = \bGamma_0^\phi\bZ_\phi \bZ_\phi'\bx_t$. The columns of the
matrix $\bZ_\phi$  are formed from the $r_\phi$ generalized eigenvectors of $(\bGamma_0^\phi, \bGamma_0^x)$,  satisfying
$\bGamma_0^\phi \bZ_\phi = \bGamma_0^x \bZ_\phi\bM_\phi^*,$ or equivalently $\bZ_\phi' \bGamma_0^\phi \bZ_\phi = \bM_\phi^*$ and $\bZ_\phi'\bGamma_0^x \bZ_\phi = \bI_r$.
For  $n\rightarrow\infty$, the sum space of $\bphi_t$ and $\bpsi_t$ is the space of the common component.
The consistency of $\tilde{\bphi}_t$ is considered in the next proposition:
\begin{proposition}\label{prop:4}
$\tilde{\bphi}_t\rightarrow_P \bphi_t$, as $n\rightarrow \infty$    and
$$\lim\limits_{n\rightarrow \infty} \tilde{\bphi}_t = \bGamma_0^x\bZ_\phi  \bZ_\phi'\bx_t.$$
\end{proposition}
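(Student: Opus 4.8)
The plan is to follow the route used for Proposition~\ref{prop:2}, replacing the common-versus-idiosyncratic dichotomy by the low-pass-versus-(high-pass plus idiosyncratic) one, and to establish the two assertions separately: that the diagonal matrix $\bM_\phi^*$ of generalized eigenvalues tends to the identity, which yields the stated limiting form of $\tilde{\bphi}_t$, and that the mean square error (\ref{eq:msephi}) vanishes, which gives consistency. Throughout I would work in the whitened coordinates $\bx_t^* = \bM_x^{-\frac{1}{2}}\bS'\bx_t$, for which $\Var(\bx_t^*)=\bI_n$ and the low-pass covariance is exactly $\bQ_\phi$, so that the $\mu_j^{\phi *}$ are the ordinary eigenvalues of $\bQ_\phi$ and the columns of $\bZ_\phi^*$ its leading eigenvectors.

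First I would transform the identity $\bGamma_0^x=\bGamma_0^\phi+\bGamma_0^\psi+\bGamma_0^\xi$ into the resolution $\bQ_\phi+\bQ_\psi+\bQ_\xi=\bI_n$, which already gives $0\le\mu_j^{\phi *}\le1$ for every $j$. By the Courant--Fischer characterization, $\mu_{r_\phi}^{\phi *}=\max_{\dim U=r_\phi}\min_{\0\neq\bv\in U}\big(\bv'\bGamma_0^\phi\bv\big)/\big(\bv'\bGamma_0^x\bv\big)$. Choosing $U$ to be the span of the leading $r_\phi$ eigenvectors of $\bGamma_0^\phi$, the quotient on $U$ equals $\big[1+(\bv'\bGamma_0^\psi\bv+\bv'\bGamma_0^\xi\bv)/\bv'\bGamma_0^\phi\bv\big]^{-1}$, where the denominator is at least $\mu_{r_\phi}^\phi\|\bv\|^2\to\infty$ by Assumption~\ref{ass:2} and the idiosyncratic term is at most $K\|\bv\|^2$ by Assumption~\ref{ass:1}(c,e); hence the idiosyncratic contribution to the ratio is $O(K/\mu_{r_\phi}^\phi)\to0$, exactly as in Proposition~\ref{prop:2}.

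The genuinely new difficulty, and the step I expect to be the main obstacle, is the high-pass term $\bv'\bGamma_0^\psi\bv$: unlike the idiosyncratic component, $\bGamma_0^\psi$ carries its own $r_\psi$ diverging eigenvalues and therefore cannot be dismissed as bounded. The argument must show that on the leading low-pass eigenspace $U$ the high-pass quadratic form grows strictly more slowly than $\bv'\bGamma_0^\phi\bv$, so that $\bv'\bGamma_0^\psi\bv/\bv'\bGamma_0^\phi\bv\to0$ uniformly over $U$. This is where Assumption~\ref{ass:2}, through the condition $r_\phi+r_\psi=r$, must be used together with the orthogonality $\E(\bphi_t\bpsi_t')=\0$: jointly they force the leading $r_\phi$-dimensional eigenspace of $\bGamma_0^\phi$ and the leading $r_\psi$-dimensional eigenspace of $\bGamma_0^\psi$ to become asymptotically orthogonal in the $\bGamma_0^x$ metric and to span the $r$-dimensional common space, so that on $U$ the mass of $\bGamma_0^\psi$ falls on its non-diverging directions and remains bounded. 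Granting this separation, the Rayleigh quotient on $U$ tends to $1$, giving $\mu_{r_\phi}^{\phi *}\ge1-o(1)$; combined with $\mu_j^{\phi *}\le1$ this yields $\mu_j^{\phi *}\to1$ for $j=1,\dots,r_\phi$, i.e. $\bM_\phi^*\to\bI_{r_\phi}$. Since $\tilde{\bphi}_t=\bGamma_0^x\bZ_\phi\bM_\phi^*\bZ_\phi'\bx_t$, the limiting form $\lim_{n\to\infty}\tilde{\bphi}_t=\bGamma_0^x\bZ_\phi\bZ_\phi'\bx_t$ follows at once.

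For consistency I would argue, as for the common component, that each diagonal entry of the error covariance (\ref{eq:msephi}) tends to zero. In the whitened coordinates that covariance equals $\bQ_\phi-\bQ_\phi\bZ_\phi^*\bZ_\phi^{*\prime}\bQ_\phi=\sum_{j\le r_\phi}\mu_j^{\phi *}(1-\mu_j^{\phi *})\bz_{\phi j}^*\bz_{\phi j}^{*\prime}+\sum_{j>r_\phi}\mu_j^{\phi *}\bz_{\phi j}^*\bz_{\phi j}^{*\prime}$; the first sum vanishes because $\mu_j^{\phi *}\to1$ and the second because the remaining whitened low-pass eigenvalues are asymptotically negligible under Assumption~\ref{ass:2}. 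The delicate point, handled as in Proposition~\ref{prop:2}, is to transfer this back through $\bS\bM_x^{\frac{1}{2}}$ and verify that the diagonal entries of $\bGamma_0^\phi-\bGamma_0^\phi\bZ_\phi\bZ_\phi'\bGamma_0^\phi$ vanish, which requires the rate $1-\mu_j^{\phi *}\to0$ to dominate the growth of $\bM_x^{\frac{1}{2}}$ in the common directions. Finally, since $\tilde{\bphi}_t-\bphi_t$ is a zero-mean linear form in the data with covariance (\ref{eq:msephi}), vanishing of its diagonal gives $\Var(\tilde{\phi}_{it}-\phi_{it})\to0$ for each fixed $i$, and hence $\tilde{\bphi}_t\rightarrow_P\bphi_t$ by Chebyshev's inequality.
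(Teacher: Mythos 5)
Your overall architecture mirrors the paper's (establish $\bM_\phi^*\to\bI_{r_\phi}$, then kill the mean square error by projecting it onto the relevant subspace), but the step you yourself flag as ``the main obstacle'' is left unproven, and as formulated it is not actually a consequence of the assumptions. You choose as witness subspace $U$ the span of the leading $r_\phi$ eigenvectors of $\bGamma_0^\phi$ and then need $\bv'\bGamma_0^\psi\bv/\bv'\bGamma_0^\phi\bv\to 0$ on $U$. Assumptions \ref{ass:1}--\ref{ass:2} do not deliver this. Take $r_\phi=r_\psi=1$, $\bGamma_0^\phi=\mu_n\,\ba\ba'$, $\bGamma_0^\psi=\mu_n\,\bb\bb'$ with unit vectors satisfying $\ba'\bb=\rho\neq 0$ fixed: this is compatible with $\E(\bphi_t\bpsi_t')=\0$ (orthogonality of the processes constrains $\bGamma_0^\chi=\bGamma_0^\phi+\bGamma_0^\psi$, not the loading directions), $\bGamma_0^\chi$ has exactly $r=2$ diverging eigenvalues, yet on $U=\mathrm{span}(\ba)$ the ratio tends to $\rho^2>0$ and the generalized Rayleigh quotient on $U$ tends to $1/(1+\rho^2)<1$. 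The conclusion $\mu_{r_\phi}^{\phi*}\to 1$ is still true, but it is attained on a different $r_\phi$-dimensional subspace of the common space (in the example, the direction $\ba-\rho\bb$, which annihilates $\bGamma_0^\psi$), so your Courant--Fischer argument needs a different witness subspace --- one on which $\bGamma_0^\psi$ is bounded while $\bGamma_0^\phi$ diverges --- and the existence of such a subspace is precisely the content that has to be proved. Writing ``granting this separation'' at that point leaves the proposition unestablished.

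The paper takes a different and more economical route to the same intermediate fact: it applies Weyl's inequalities to the ordinary eigenvalues in the decomposition $\bGamma_0^\chi=\bGamma_0^\phi+\bGamma_0^\psi$, combined with $r=r_\phi+r_\psi$ from Assumption \ref{ass:2}, to conclude that $\mu_j^\phi$ diverges for $j\leq r_\phi$ and is bounded for $j>r_\phi$, and then transfers this to $\bM_\phi^*\to\bI_{r_\phi}$ exactly as in Proposition \ref{prop:2}. For the consistency part your computation of the whitened error $\bQ_\phi-\bQ_\phi\bZ_\phi^*\bZ_\phi^{*\prime}\bQ_\phi$ is correct, but note that your claim that the tail term $\sum_{j>r_\phi}\mu_j^{\phi*}\bz_{\phi j}^*\bz_{\phi j}^{*\prime}$ is negligible also needs an argument (boundedness of $\mu_j^\phi$ for $j>r_\phi$ only gives $\mu_j^{\phi*}=O(1)$ via $\bGamma_0^x\geq\mu_n^\xi\bI_n$, not $o(1)$); the paper avoids this by examining only the compression $\bZ_\phi'\E\{(\bphi_t-\tilde{\bphi}_t)(\bphi_t-\tilde{\bphi}_t)'\}\bZ_\phi=(\bI_{r_\phi}-\bM_\phi^*)\bM_\phi^*$, which involves only the leading block.
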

\begin{proof}
See Appendix \ref{app:prop4}.
\end{proof}

The limiting solution can be written $\tilde{\bphi}_t = \bA_\phi \bef_{\phi t}$, where $\bef_{\phi t} = \bZ_\phi'\bx_t$ are the smooth generalized principal components,  or
$\tilde{\bphi}_t= \bP_\phi\bx_t$, where $\bP_\phi = \bGamma_0^x \bZ_\phi\bZ_\phi'$ is  idempotent projection matrix of rank $r_\phi$.

Notice that $\bI_n = \bQ_\phi+\bQ_\psi+\bQ_\xi$, where $\bQ_\phi$ was defined by Prop. \ref{prop:3}, $\bQ_\psi = \bM_x^{-\frac{1}{2}}\bS'\bGamma_0^\psi\bS\bM_x^{-\frac{1}{2}}$,  and
$\bQ_\xi = \bM_x^{-\frac{1}{2}}\bS'\bGamma_0^\xi\bS\bM_x^{-\frac{1}{2}}.$

As $n\rightarrow \infty$, $\bQ_\phi \rightarrow\bP^*_\phi =  \bZ_\phi^*\bZ_\phi^{*\prime}$, the orthogonal projection matrix into the subspace generated by $\bphi_t^*$. The subspace spanned by $\bphi_t$ is a linear transformation and the projection of $\bx_t^*$ onto this subspace is $\bS\bM_x^{\frac{1}{2}}\bZ_\phi^*\bZ_\phi^{*\prime}\bx_t^*$. We have $\bZ_\phi^{*\prime}\bx_t^* = \bZ_\phi'\bx_t$ and $\bS\bM_x^{\frac{1}{2}}\bZ_\phi^*=\bGamma_0^x\bZ_\phi$. Hence $\bGamma_0^x \bZ_\phi \bZ_\phi'\bx_t$ is the projection of $\bx_t$ onto the subspace spanned by $\bphi_t$ and $\bP_\phi = \bGamma_0^x\bZ_\phi \bZ_\phi'$ is the corresponding projection matrix.

\section{Monthly indicator of GDP growth    \label{sec:usacoin}}

The smooth generalized principal components are now used to construct the indicator of M2LR GDP growth at the monthly frequency.
The model for unobserved monthly GDP growth assumes that $\Delta y_t$ depends on the information available in real time only via the smooth generalized principal components, that is,
\begin{equation}
\Delta y_t = \mu + \bvartheta' \bef_{\phi t} + \epsilon_t, \;\;\; \epsilon_t \sim \mbox{WN}(0,\sigma^2). \label{eq:model}
\end{equation}
Given the estimate of $\bef_{\phi t}$ obtained in section (\ref{sec:est}), the vector of loadings $\bvartheta$ can be estimated from the  quarterly GDP growth rates,
$g_t\approxeq (1+L+L^2)^2 \Delta y_t$, which are observed every third month.  The approximation is due to the fact that the sum of the logarithms of monthly GDP is only a first order Taylor approximation of the logarithm of the sum of GDP across the three months representing the quarter.

Filtering both sides of (\ref{eq:model}) by $(1+L+L^2)^2$ and taking a systematic sample with step 3, yields the  approximate quarterly model
\begin{equation}
\label{eq:qonq}
g_\tau  =  9\mu + \bvartheta' \bF_\tau  + \epsilon_\tau^*, \;\;\tau = 1, 2, \ldots, \mathcal{T},
\end{equation}
where $\bF_\tau$ is a systematic sample of $\bF_t = (1+L+L^2)^2 \bef_{\phi t}$ and $\epsilon_{\tau}^*$ is the  MA(1) process $\epsilon_{\tau}^* = \eta^*_\tau +b \eta^*_{\tau-1}$, $b = .221$, $\eta^*_\tau \sim \mbox{WN}(0, 19\sigma^2)$.

As for the M2LR monthly indicator of growth  at the annual horizon,
letting $a_t = (1+L+L^2)(1+L+\cdots+L^{11})$, and denoting by  $a_\tau$ its systematic sample, we have
\begin{equation}
\label{eq:yoy}
a_\tau  =  36\mu + \bvartheta' \bF_\tau^*  + \epsilon_\tau^\dag, \;\;\tau = 1, 2, \ldots, \mathcal{T},
\end{equation}
where $\bF_\tau^*$ is a systematic sample of $\bF_t^* = (1+L+L^2)(1+L+\cdots+L^{11})\bef_{\phi t}$ and $\epsilon_\tau^\dag$ is an MA(4) process   with spectral density
$$\sigma_{\epsilon^\dag}(\theta) = 12\sum_{j=0}^2 \frac{\sin^2(3\theta_j/2)\sin^2(12\theta/2)}{\sin^4(\theta_j/2)}\sigma^2,\;\;\; \theta_j = \frac{\theta + 2\pi j}{3}, j=0,1,2.$$.

\section{Estimation in finite cross-sections \label{sec:est}}

Our methodology aims at producing estimates of the M2LR component of GDP growth at three horizons: the monthly horizon, via $\tilde{\Delta y}_t = \tilde{\mu}+\tilde{\bvartheta}' \tilde{\bef}_{\phi t}$, the quarterly horizon, via $\tilde{g}_t = 9\tilde{\mu}+\tilde{\bvartheta}' \tilde{F}_{t}$, and the annual horizon, via
$\tilde{a}_t = 36\tilde{\mu}+\tilde{\bvartheta}' \tilde{\bF}_t^*$.

The estimation of $\bphi_t$ requires selecting $\r_{\phi}$ and computing the generalized eigenvectors of the matrix $\tilde{\bGamma}_0^{\phi}$ in the metric of $\tilde{\bGamma}_0^x$. While the latter is estimated from the observed time series $\{\bx_t, t=1, \ldots, T\}$ by sample variance matrix $\tilde{\bGamma}_0^x = \frac{1}{T} \sum_t \bx_t\bx_t'$, the former requires inverting the spectral density estimate $\tilde{\bSigma}^\phi(\theta)$.

The spectral density of $\bx_t$ is estimated by the Bartlett estimator
$$\tilde{\bSigma}^x(\theta) = \frac{1}{2\pi} \sum_{k=-M_T}^{M_T}\left(1-\frac{|k|}{M_T+1}\right)\tilde{\bGamma}_{k}^xe^{-\imath \theta k},$$
where $\tilde{\bGamma}_k = \frac{1}{T}\sum_{t=k+1}^T \bx_t\bx_{t-k}$ is the sample crosscovariance matrix at lag $k$.
The window parameter is selected in the range $c T^{1/3}<M_T < cT^{1/2}$, given a positive constant $c$. The asymptotic theory for $\tilde{\bSigma}^x(\theta)$ is derived in \shortciteA{forni2017dynamic}, which shows its  uniform consistency with respect to $\theta$ as $T\rightarrow \infty$, under regularity conditions on the process $\bx_t$. In our illustrations we used $M_T = 20$, which corresponds to $0.75T^{1/2}$.  The spectrum is estimated at $2m+1=151$ equally spaced frequencies in the range $[-\pi, \pi]$, $\theta_h=2\pi h/(2m+1), h=-m, \ldots, m$.

The spectral density of the common component at $\theta_h$ is estimated by performing the eigendecomposition of $\tilde{\bSigma}^x(\theta_h) = \sum_{k=1}^n  \tilde{\lambda}_k(\theta_h) \tilde{\bp}_k(\theta_h)\tilde{\bp}_k^H(\theta_h)$ and setting
$$\tilde{\bSigma}^\chi(\theta_h) = \sum_{k=1}^q  \tilde{\lambda}_k(\theta_h)\tilde{\bp}_k(\theta_h)\tilde{\bar{\bp}}_k (\theta_h).
$$
The variance-covariance matrix of the common component is then estimated by the Riemann sum
$$\tilde{\bGamma}_0^\chi =  \frac{1}{2m+1}\sum_{h=-m}^{m}  \tilde{\bSigma}^\chi(\theta_h),$$
whereas that of $\bphi_t$ is estimated by summing across all $2m_c+1$ frequencies in the range $-\theta_c\leq\theta_h\leq \theta_c$, where $\theta_c=\pi/6$, i.e.,
$$\tilde{\bGamma}_0^\phi =  \frac{1}{2m_c+1}\sum_{h=-m_c}^{m_c}  \tilde{\bSigma}^\chi(\theta_h).$$
The generalized principal components are then estimated as  $\tilde{\phi}_t = \tilde{\bZ}_\phi'\bx_t$, where $\tilde{\bZ}_\phi$ are the generalized eigenvectors of $\tilde{\bGamma}^\phi_0$.

Finally, The loadings $\bvartheta$ are estimated by performing a band-spectrum regression \shortcite{engle1974band} of the observed quarterly growth rates $g_\tau$ on  $(1+L+L^2)^2 \tilde{\phi}_t$, sampled at quarterly intervals:  denoting the  Fourier transforms of $g_{\tau} $ and $\tilde{\bF}_{\tau}$, respectively by
$$J_g(\omega_j) = \frac{1}{\sqrt{2\pi \mathcal{T}}}\sum_{\tau=1}^\mathcal{T} g_{\tau} e^{-\imath \omega_j \tau}, \bJ_F(\omega_j) = \frac{1}{\sqrt{2\pi \mathcal{T}}}\sum_{\tau=1}^\mathcal{T} \tilde{\bF}_\tau e^{-\imath \omega_j \tau},$$
where $\omega_j=2\pi j/\mathcal{T}$ are the Fourier frequencies in the range $\Omega_c = (0,\pi/6)\cup (2\pi-\pi/6, 2\pi)$ and letting $S(\omega_j) = (1+b^2+2 b \cos \omega_j)$
$$\hat{\bvartheta} = \left(\sum_{j\in \Omega_c} \bJ_F(\omega_j) \bJ_F(\omega_j)^H/S(\omega_j)\right)^{-1}\sum_{j\in \Omega_c} \bJ_F(\omega_j) J_g(\omega)/S(\omega_j).$$

\subsection{The uncertainty in the smooth common components \label{sec:boot}}

The estimation error uncertainty for finite $n$ and $T$ can be evaluated by a parametric bootstrap method, proceeding along the following steps. By Theorem 9.4.4 in  \shortciteA{brillinger1981time}, the sampling distribution of the spectral density estimator $\hat{\bSigma}^x(\theta)$ can be approximated by a complex Wishart distribution with  $\nu = T/M_T$ degrees of freedom and scale matrix $\bSigma^x(\theta)/\nu$,  written   $\hat{\bSigma}^x(\theta)\sim W_{C}\left(\nu,\bSigma^x(\theta) \right)$. In view of the consistency rates in \shortciteA{forni2017dynamic}, we also considered setting the degrees of freedom $T^*= T/[M_T\log(M_T)]$.

Multiple independent draws are taken from  $\hat{\bSigma}^x(\theta_h)$. For each sample the dynamic eigenvalues and eigenvectors are computed, and conditional on $q$ and $r$ the smooth common component are estimated. Let $\hat{\bef}_{\phi t}^+ =\bZ_\phi^{+'}\bx_t$  denote a draw of  smooth generalized principal components based on the generalized eigenvectors of the matrix constructed from the simulated spectral density.

Letting  $\Re\left(\cdot\right)$ and $\Im\left(\cdot\right)$ denote respectively the real and complex part of the argument, and let $ \hat{\bSigma}^{x}_{\Re}(\theta_h) = \Re\left(\hat{\bSigma}^{x}(\theta_h)\right)$ and $ \hat{\bSigma}^{x}_{\Im}(j)  = \Im \left(\hat{\bSigma}^{x}(\theta_h)\right)$,  we form the matrix
 $\hat{\bS}(\theta_h) = \left(\begin{array}{c|c} \hat{\bSigma}^{x}_{\Re}(\theta_h) & \hat{\bSigma}^{x}_{\Im}(\theta_h) \\ \hline -\hat{\bSigma}^{x}_{\Im}(\theta_h) &  \hat{\bSigma}^{x}_{\Re}(\theta_h)\end{array}\right)$
where, for $h =-m, -m+1, \ldots, m-1, m,$  $\theta_h = 2\pi h/(2m+1) \in \left[-\pi;\pi\right]$; the resampling algorithm operates as follows.

For $b = 1,2,\dots,B$,
\begin{enumerate}
\item[(i)] draw $\hat{\bS}_b(\theta_h) \sim W(\nu, \nu^{-1}\hat{\bS}(\theta_h))$, $h =-m, -m+1, \ldots, m-1, m$;
\item[(ii)] generate $\hat{\bSigma}_b^{x}(\theta_h)  = \Re\left(\hat{\bS}_b^{+}(\theta_h) \right) -\imath  \Im \left(\hat{\bS}_b^{+}(\theta_h) \right)$;
\item[(iii)] compute the generalized eigenvectors $\hat{\bSigma}_b^{x}(\theta_h)$ and estimate the smooth generalized principal components, $\hat{\bef}_{\phi t}^{(b)} =\bZ_\phi^{(b)'}\bx_t$, as described in section \ref{sec:est};
\item[(iv)] using the procedure reported in Section \ref{sec:usacoin}, equations \eqref{eq:qonq}-\eqref{eq:yoy}, we construct a draw of
$\tilde{g}_{t}^{(b)} = 9\tilde{\mu}^{(b)}+\tilde{\bvartheta}^{(b)'} \tilde{\bF}_{t}^{(b)} \;\; \text{and} \;\; \tilde{a}_{t}^{(b)} = 36\tilde{\mu}^{(b)} + \tilde{\bvartheta}^{(b)'} \tilde{\bF}_t^{(b)'}$.
  \end{enumerate}

The draws $\tilde{g}_{t}^{(b)}$ and $\tilde{a}_{t}^{(b)}$, $b = 1, \ldots, B$, can be used to estimate the density nowcasts of GDP underlying growth, that we denote by $\tilde{f}_t^{g}(g_t|\mathcal{I}_t)$ and $\tilde{f}_t^{a}(a_t|\mathcal{I}_t)$, respectively, where $\mathcal{I}_t$ is the information set available at time $t$.

\section{Nowcasting the Medium-to-long run component of US GDP \label{sec:predval}}

\subsection{The data and the real-time simulation design}

The natural testbed for illustrating and assessing our methodology is the FRED-MD monthly database compiled by \shortciteA{mccracken2016fred}. The dataset  consists of monthly observations on 122 U.S. macroeconomic time series observed from January 1959 to December 2019, grouped into 8 categories, see Appendix \ref{app:series}. The original time series are subjected to a stationarity inducing transformation, according to \shortciteA{mccracken2016fred}. The target series, quarterly  GDP (Billions of Chained 2012 Dollars, seasonally adjusted) is also made available at the St. Luis FED Economic Data (https://fred.stlouisfed.org/series/GDPC1).

Our methodology produces monthly nowcasts  of the M2LR component of GDP growth $g_t$ (quarterly horizon) and $a_t$ (annual horizon), that will be referred to as US COIN (U.S. COincident INdicators) in the sequel. For measuring their accuracy, as already discussed in \shortciteA{Altissimo2010new}, we need to target a measure of  M2LR. The latter cannot be observed directly, but it can be estimated by an approximation to the ideal low-pass filter, the \shortciteA[BK]{baxter1999measuring} filter, to the GDP growth rates interpolated at the monthly frequency. The BK filter is a two-sided symmetric filter that results from truncating the ideal low-pass filter by considering 36 (3 years of monthly data) past and future observations, along with the reference time,  and rescaling the impulse response weights so that they sum up to 1. The filter is applied to the interpolated monthly proxies of $g_t$ and $a_t$, obtained by applying the Whittaker-Kotel'nikov-Shannon sampling theorem, see \shortciteA[Theorem 7.2.2]{partington1997interpolation}, to the extended series, as detailed in  Appendix \ref{sec:target}.  Appendix \ref{sec:target} also discusses the robustness of this measure, vis-a-vis an alternative estimate based on a model-based band-pass filter.

Figure \ref{fig:target} displays the  monthly  target measure of M2LR (red line), which provides the basis for the evaluation of our method, both for quarter-on-quarter (q-o-q) and the year-on-year (y-o-y) growth horizons. Notice that the target measure is an \emph{oracle}, as it uses also future observations. It is the value the M2LR component would take in month $t$, were GDP growth available for the future 3 years.

For evaluating the performance we perform a pseudo-real time rolling nowcasting experiment using, as a test sample the period January 1980-December 2018.  We use as initial training sample the first 241 observations (1960:3 - 1980:1) to estimate the M2LR at the end of the sample (January 1980), that we then compare to the benchmark estimate of the M2LR component of GDP growth. The training sample is then moved forward, by adding  one observation at the end and removing the first observation, so as to cover the period 1960:4 - 1980:2;  US COIN is re-estimated and a new nowcast for February 1980 is made available. Furthermore, the rank $r$ of the variance covariance matrix $\hat{\bGamma}_{0}$ and of the spectral density $\bSigma(\theta)$ is determined by means of the BN criterion proposed by \shortciteA{bai2002determining} and the HL criterion proposed by \shortciteA{hallin2007determining} respectively. This is iterated until we reach the end of the test sample (December 2018).

\begin{figure}[h]
\caption{\small{Monthly target measure of  M2LRG of U.S. GDP. Top panel: quarter-on-quarter (q-o-q) M2LR growth and GDP quarterly growth. Bottom panel: year-on-year (y-o-y) M2LR and yearly GDP growth}}
\centering
\includegraphics[width=12cm,height=6cm]{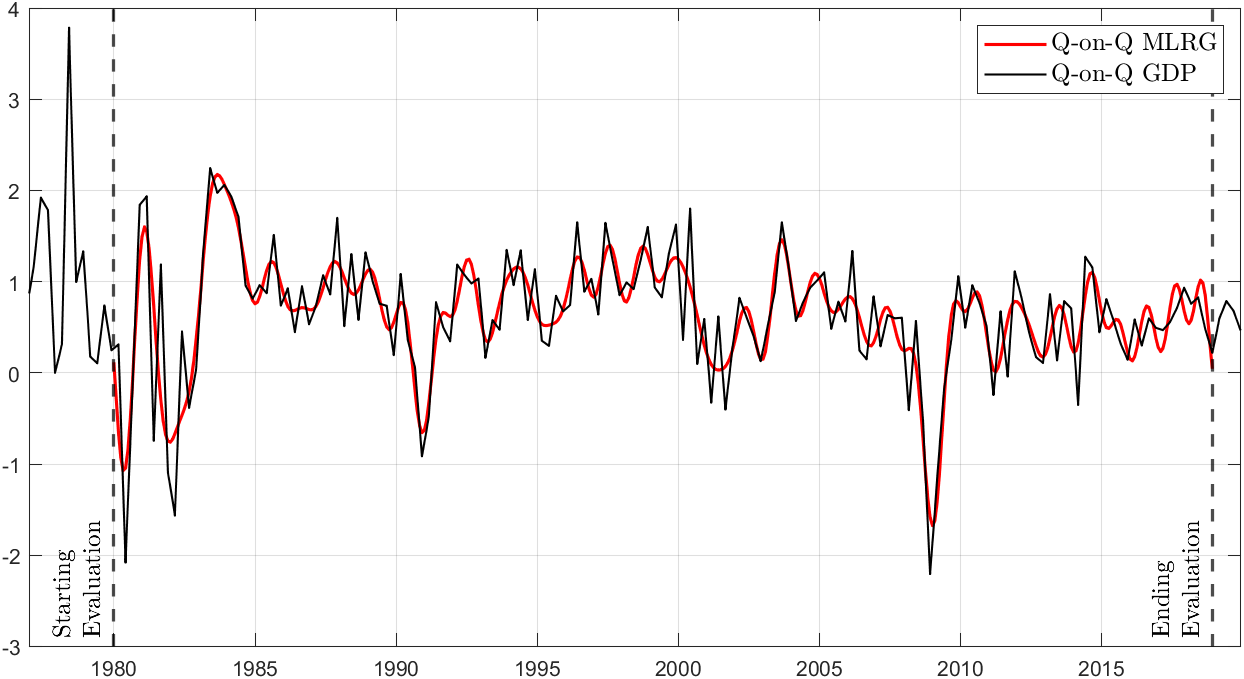}\\
\vspace{0.25cm}
\includegraphics[width=12cm,height=6cm]{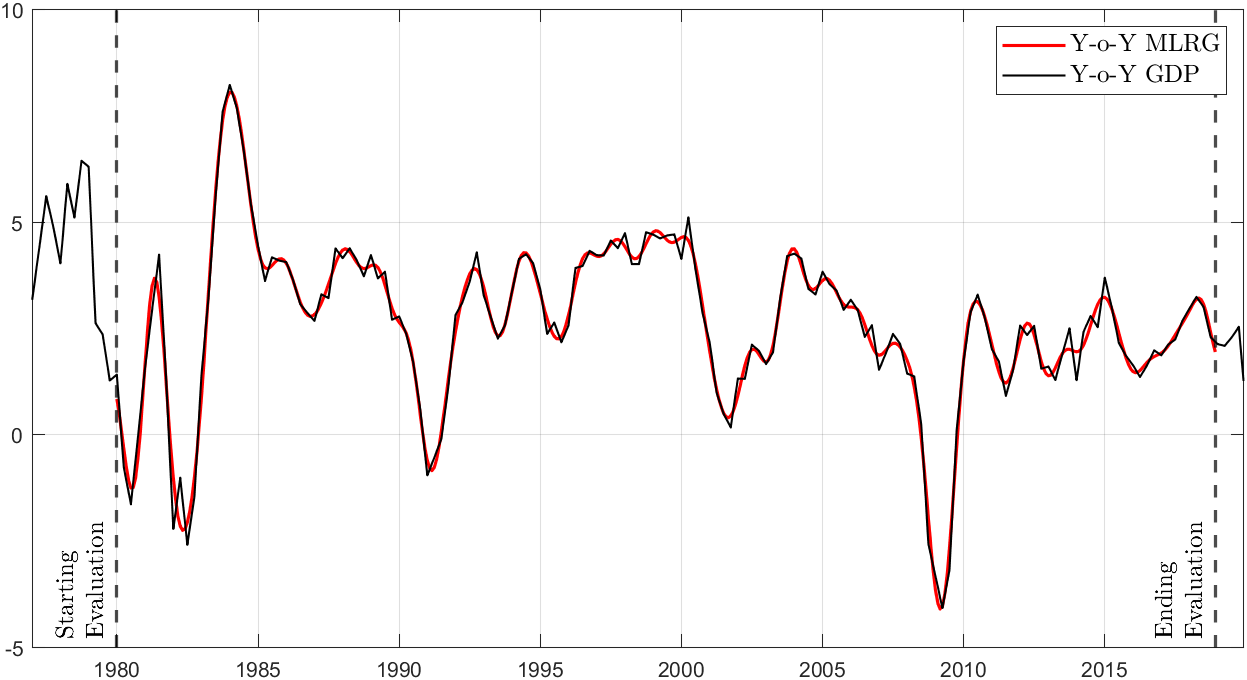}
\label{fig:target}
\end{figure}

For a comparative assessment, we could think of alternative ways to nowcasts  and forecasts the M2LR component of GDP growth, based on extending the quarterly GDP series by out-of-sample predictions. We envisaged three competitors.
\begin{itemize}
\item[BP] The BK filter is applied to the interpolated series after extending it by the sample mean of GDP growth.
\item[CF]  The quarterly GDP growth series is extended by AR(1) forecasts up to 12 quarters ahead. The AR(1) is estimated by least squares and the multistep ahead forecasts are obtained by the indirect method (chain rule). The application of the BK filter to the series extended by forecasts was proposed by \shortciteA[CF]{christiano2003band}.
\item[SW] The M2LR component is estimated according to the same US COIN methodology, but in section \ref{sec:usacoin} the smooth generalized principal components,  $\bef_{\phi t}$, are replaced by the standard principal components, $\bS_t \bx_t$, see \shortciteA{stock2002forecasting,stock2002macroeconomic}.
 \end{itemize}
To obtain the monthly nowcast for the methods BP and CF, the interpolated monthly GDP growth proxies of $g_t$ and $a_t$ are obtained by applying the Whittaker-Kotel'nikov-Shannon formula, see   Appendix \ref{sec:target}; hence, the BK filter with cut-off frequency $\pi/6$ is applied to the demeaned series, and finally the mean is added back to the filtered series to obtain the three nowcasts of the M2LR of GDP, which are compared to the USCOIN  nowcast.

\subsection{Empirical results}

For the sake of notation simplicity, we denote the q-on-q and y-on-y M2LRG targets at time $t$ by $\mathcal{C}_t$, and the estimate based on the rolling sample ending at time $s$ by $\hat{\mathcal{C}}_t(s)$.

The ability of the nowcast   $\hat{\mathcal{C}}_t(t)$, to approximate the target $\mathcal{C}_t$ is measured by the mean-square nowcast error (MSNE), $\sum_{t=1}^{T} [\hat{\mathcal{C}}_t(t) - \mathcal{C}_t]^2/T$, where $T$ is the length of the test sample.

We also consider the size of the revision after one month, as we move from one rolling sample to the next. This is measured by the mean-square revision error (MSRE),
$\sum_{t=1}^{T-1} [\hat{\mathcal{C}}_{t-1}(t) - \hat{\mathcal{C}}_{t-1}(t-1)]^2/T$.

Table \ref{tab:point_estimate} reports the values of the two statistics, as a fraction of the target variance,
$\sum_{t=1}^{T} (\mathcal{C}_{t} - \bar{\mathcal{C}})^2/T, \bar{\mathcal{C}} = \sum_t C_t/T$. The US COIN indicator outperforms its competitors at both horizons in terms of closeness to the target measure. Its MSNE for the q-o-q horizon is 12\% smaller than SW, the best performing competitor. The Diebold-Mariano \shortcite{diebold2002comparing} test of equal predictive accuracy under square loss leads to a rejection of the null at the  10\% significance level
For the y-o-y horizon it improves slightly over CF.
The size of the revision errors is relatively small for all indicators, except for  SW, which displays a larger MSRE.

\begin{table}[htbp]
  \centering
  \caption{Mean square nowcast error (MSNE) and mean square revision error (MSRE), relative to the target variance.}
    \label{tab:point_estimate}%
    \scalebox{0.85}{\begin{tabular}{rcc}
    \toprule
    \multicolumn{3}{c}{Panel A: M2LR q-on-q} \\
    \midrule
    \multicolumn{1}{c}{Indicator} & \multicolumn{1}{c}{Relative } & \multicolumn{1}{c}{Relative} \\
          & \multicolumn{1}{c}{MSNE} & \multicolumn{1}{c}{MSRE} \\
    \midrule
    \multicolumn{1}{l}{US COIN} & \multicolumn{1}{c}{0.333} & \multicolumn{1}{c}{0.185} \\
    \multicolumn{1}{l}{BP } & \multicolumn{1}{c}{0.479} & \multicolumn{1}{c}{0.168} \\
    \multicolumn{1}{l}{CF } & \multicolumn{1}{c}{0.406} & \multicolumn{1}{c}{0.203} \\
    \multicolumn{1}{l}{SW} & \multicolumn{1}{c}{0.378} & \multicolumn{1}{c}{0.354} \\
    \midrule
    \multicolumn{3}{c}{Panel B: M2LR y-o-y} \\
    \midrule
    \multicolumn{1}{l}{US COIN} & \multicolumn{1}{c}{0.291} & \multicolumn{1}{c}{0.108} \\
    \multicolumn{1}{l}{BP } & \multicolumn{1}{c}{0.509} & \multicolumn{1}{c}{0.103} \\
    \multicolumn{1}{l}{CF } & \multicolumn{1}{c}{0.301} & \multicolumn{1}{c}{0.063} \\
    \multicolumn{1}{l}{SW} & \multicolumn{1}{c}{0.315} & \multicolumn{1}{c}{0.201} \\
    \bottomrule
    \end{tabular}}\\%
\begin{center}
\end{center}
\end{table}%

\subsection{Assessing the US COIN nowcast uncertainty in real-time \label{sec:density}}

The bootstrap methodology described in section \ref{sec:boot} produces   draws $\tilde{g}_{t}^{(b)}$ and $\tilde{a}_{t}^{(b)}$, $b = 1, \ldots, B$, from the distribution of the M2LR indicators  conditional on a finite $n$ and finite $T$ panel of macroeconomic indicators and past quarterly GDP data.
Let  $\tilde{f}_t^{g}(g_t|\mathcal{I}_t)$ and $\tilde{f}_t^{a}(a_t|\mathcal{I}_t)$ denote respectively the density nowcasts of the q-o-q and y-o-y M2LR estimates, where $\mathcal{I}_t$ is the information set available at time $t$. The variability of the nowcasts is essentially due to the finite sample estimation error uncertainty concerning the smooth principal components, which is in turn ascribed to that concerning the spectral density matrix.

In other words, if we had an infinite cross-section observed for $T$ large, then $\bef_{\phi t}$ is asymptotically observed and the only parameter uncertainty would arise for the estimation of the loadings of GDP on the generalized principal components, denoted $\bvartheta$ in (\ref{eq:qonq}) and (\ref{eq:yoy}). However, for finite $n$ and $T$, $f_{\phi t}$ is affected by measurement error, so that  $\tilde{\bef}_t^{g}(g_t|\mathcal{I}_t)$ and $\tilde{f}_t^{a}(a_t|\mathcal{I}_t)$ enable to quantify the nowcast uncertainty, by considering the dispersion of the projection of GDP growth on the estimated smooth generalized principal components.

Figure \ref{fig:uncertainty} displays the evolution over time of the nowcasting densities of the q-o-q indicator (top figure) and y-o-y indicator (bottom figure). It is a fan chart with the different shades of red corresponding to the deciles of the distribution.
It should be remarked that the densities are estimated in real time using rolling samples of 20 years of data. If a recursively updated training set were used instead, the dispersion would be lower at the cost of larger biases, as the spectral density estimates would be less localized.

\begin{figure}[h]
\caption{Fan chart of US COIN nowcast distribution.\label{fig:uncertainty}}
\centering
\includegraphics[width=12cm,height=6cm]{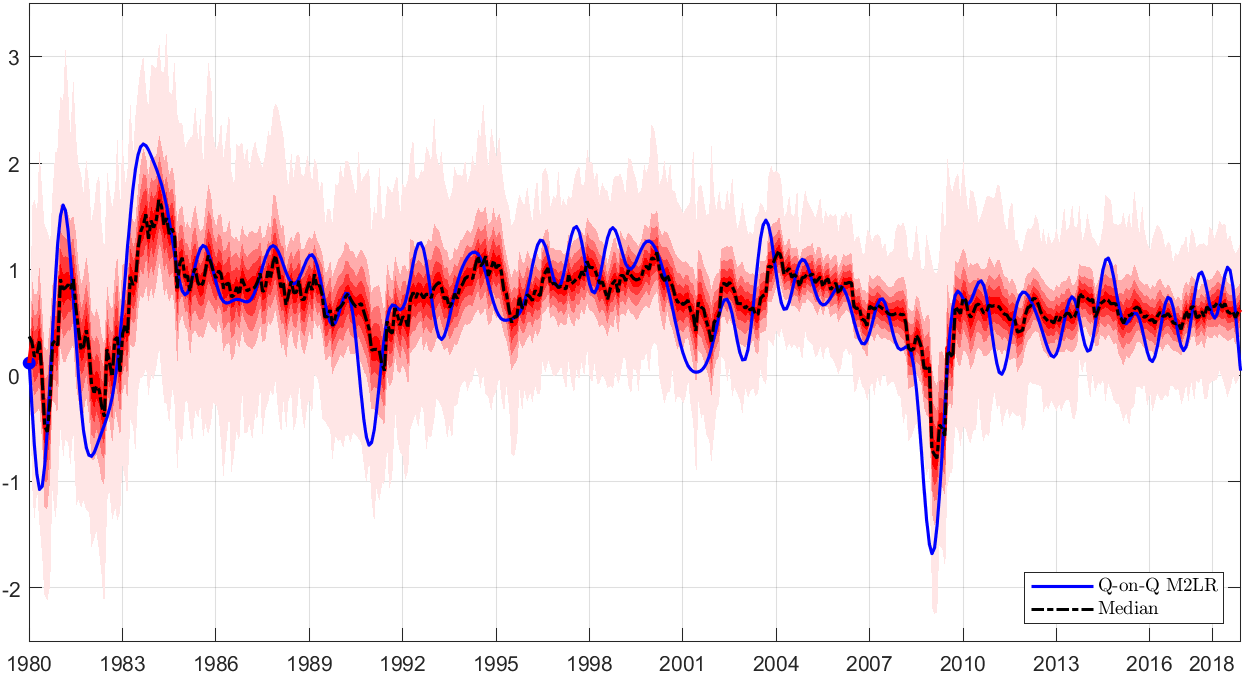}\\
\vspace{0.25cm}
\includegraphics[width=12cm,height=6cm]{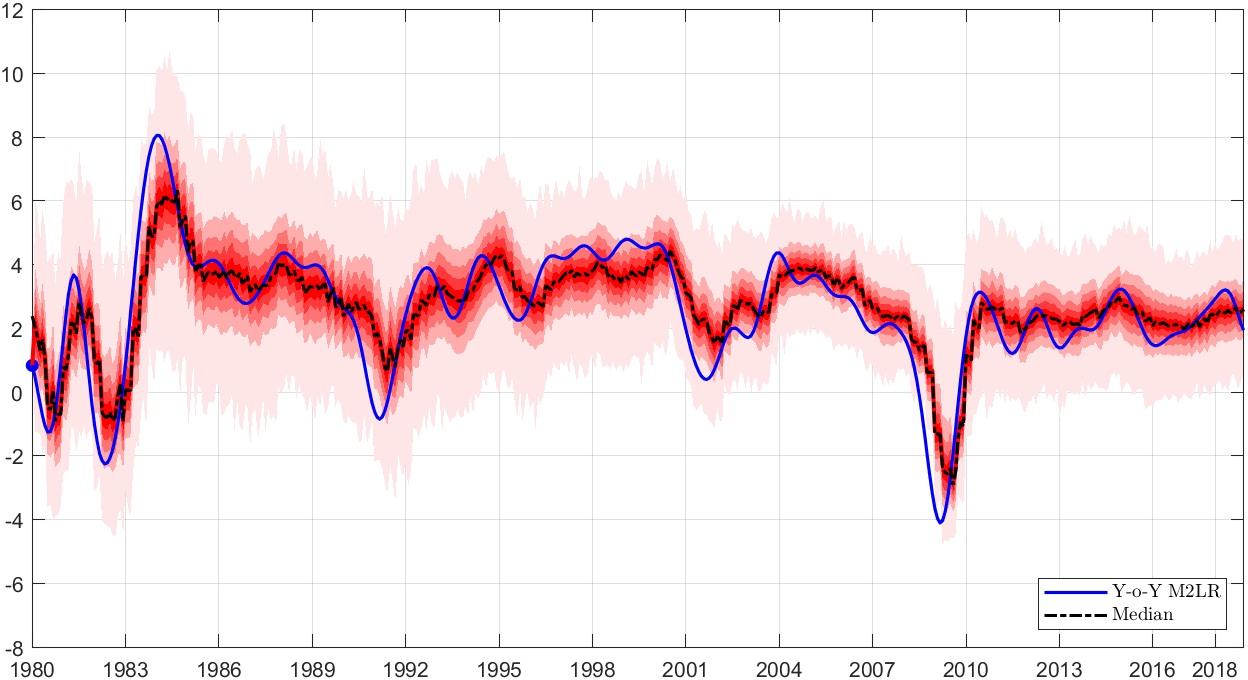}
\begin{center}
\end{center}
\end{figure}

To examine whether the nowcast densities are properly calibrated, we evaluate the cumulative distribution function of the probability integral transform of  $\mathcal{C}_t$,
$PIT_{t} = \int_{-\infty}^{\mathcal{C}_t} \tilde{f}_t^{i}(c|\mathcal{I}_t) d c, i=g,a$,
 and assess the departure from a standard uniform distribution according to \shortciteA{diebold1998evaluating}, using the critical values obtained in \shortciteA{rossi2019alternative}.
Figure \ref{fig:PIT} shows that the q-o-q nowcasts are well calibrated. The y-o-y nowcasts are marginally violating the 95\% bounds. This predictive failure is due to overprediction of GDP growth occurring around 2009 during the initial phase of the great recession; it can be considered a common feature of real time  macroeconomic forecasting.

\begin{figure}[bht]
\caption{Probability integral transform (PIT) of the  M2LR target $\mathcal{C}_t$: empirical cumulative distribution functions and 95\% confidence intervals under the null of uniformity. \label{fig:PIT} }
\centering
\includegraphics[width=5.5cm,height=5.5cm]{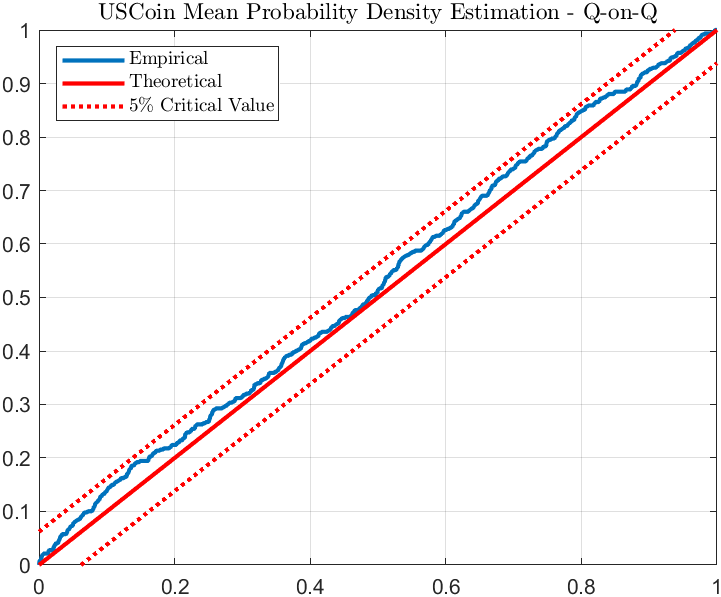}
\includegraphics[width=5.5cm,height=5.5cm]{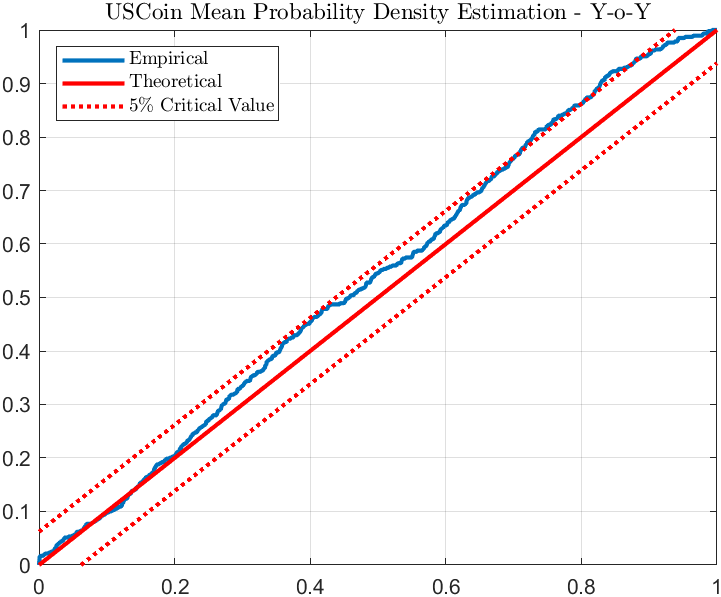}
\end{figure}
\section{Conclusions \label{sec:concl}}

The paper has considered the objective of estimating a monthly indicator of the M2LR component of GDP growth at the quarter-on-quarter and year-on-year horizons. The reference framework was the dynamic factor model, which was  used to distill the low frequency variation of a high-dimensional set of macroeconomic variables. After establishing the finite sample properties of the estimator of the smooth factors, the empirical application to nowcasting the underlying US GDP growth has illustrated that the smooth factors offer a sizable advantage in terms of  nowcasting accuracy towards a well defined target measure.

In our empirical exercise we have purposively ignored  the time series observations after March 2020, following a major structural break, related to the economic effects of the Covid-19 pandemic crisis. See  \shortciteA{ng2021modeling} for a discussion of those effects and a potential solution to the issue of estimating the smooth factors in the extended sample, covering the pandemic shock and the subsequent recovery. We leave this important topic to  future research.
\newpage
\bibliography{References}

\begin{thebibliography}{}

\bibitem [\protect \citeauthoryear {%
Altissimo%
, Cristadoro%
, Forni%
, Lippi%
\BCBL {}\ \BBA {} Veronese%
}{%
Altissimo%
\ \protect \BOthers {.}}{%
{\protect \APACyear {2010}}%
}]{%
Altissimo2010new}
\APACinsertmetastar {%
Altissimo2010new}%
\begin{APACrefauthors}%
Altissimo, F.%
, Cristadoro, R.%
, Forni, M.%
, Lippi, M.%
\BCBL {}\ \BBA {} Veronese, G.%
\end{APACrefauthors}%
\unskip\
\newblock
\APACrefYearMonthDay{2010}{}{}.
\newblock
{\BBOQ}\APACrefatitle {New Eurocoin: tracking economic growth in real time}
  {New eurocoin: tracking economic growth in real time}.{\BBCQ}
\newblock
\APACjournalVolNumPages{The Review of Economics and
  Statistics}{92}{4}{1024--1034}.
\PrintBackRefs{\CurrentBib}

\bibitem [\protect \citeauthoryear {%
Aruoba%
, Diebold%
, Nalewaik%
, Schorfheide%
\BCBL {}\ \BBA {} Song%
}{%
Aruoba%
\ \protect \BOthers {.}}{%
{\protect \APACyear {2016}}%
}]{%
aruoba2016improving}
\APACinsertmetastar {%
aruoba2016improving}%
\begin{APACrefauthors}%
Aruoba, S\BPBI B.%
, Diebold, F\BPBI X.%
, Nalewaik, J.%
, Schorfheide, F.%
\BCBL {}\ \BBA {} Song, D.%
\end{APACrefauthors}%
\unskip\
\newblock
\APACrefYearMonthDay{2016}{}{}.
\newblock
{\BBOQ}\APACrefatitle {Improving GDP measurement: A measurement-error
  perspective} {Improving gdp measurement: A measurement-error
  perspective}.{\BBCQ}
\newblock
\APACjournalVolNumPages{Journal of Econometrics}{191}{2}{384--397}.
\PrintBackRefs{\CurrentBib}

\bibitem [\protect \citeauthoryear {%
Baffigi%
, Golinelli%
\BCBL {}\ \BBA {} Parigi%
}{%
Baffigi%
\ \protect \BOthers {.}}{%
{\protect \APACyear {2004}}%
}]{%
baffigi2004bridge}
\APACinsertmetastar {%
baffigi2004bridge}%
\begin{APACrefauthors}%
Baffigi, A.%
, Golinelli, R.%
\BCBL {}\ \BBA {} Parigi, G.%
\end{APACrefauthors}%
\unskip\
\newblock
\APACrefYearMonthDay{2004}{}{}.
\newblock
{\BBOQ}\APACrefatitle {Bridge models to forecast the euro area GDP} {Bridge
  models to forecast the euro area gdp}.{\BBCQ}
\newblock
\APACjournalVolNumPages{International Journal of forecasting}{20}{3}{447--460}.
\PrintBackRefs{\CurrentBib}

\bibitem [\protect \citeauthoryear {%
Bai%
\ \BBA {} Ng%
}{%
Bai%
\ \BBA {} Ng%
}{%
{\protect \APACyear {2002}}%
}]{%
bai2002determining}
\APACinsertmetastar {%
bai2002determining}%
\begin{APACrefauthors}%
Bai, J.%
\BCBT {}\ \BBA {} Ng, S.%
\end{APACrefauthors}%
\unskip\
\newblock
\APACrefYearMonthDay{2002}{}{}.
\newblock
{\BBOQ}\APACrefatitle {Determining the number of factors in approximate factor
  models} {Determining the number of factors in approximate factor
  models}.{\BBCQ}
\newblock
\APACjournalVolNumPages{Econometrica}{70}{1}{191--221}.
\PrintBackRefs{\CurrentBib}

\bibitem [\protect \citeauthoryear {%
Ba{\'n}bura%
, Giannone%
, Modugno%
\BCBL {}\ \BBA {} Reichlin%
}{%
Ba{\'n}bura%
\ \protect \BOthers {.}}{%
{\protect \APACyear {2013}}%
}]{%
banbura2013now}
\APACinsertmetastar {%
banbura2013now}%
\begin{APACrefauthors}%
Ba{\'n}bura, M.%
, Giannone, D.%
, Modugno, M.%
\BCBL {}\ \BBA {} Reichlin, L.%
\end{APACrefauthors}%
\unskip\
\newblock
\APACrefYearMonthDay{2013}{}{}.
\newblock
{\BBOQ}\APACrefatitle {Now-casting and the real-time data flow} {Now-casting
  and the real-time data flow}.{\BBCQ}
\newblock
\BIn{} \APACrefbtitle {Handbook of economic forecasting} {Handbook of economic
  forecasting}\ (\BVOL~2, \BPGS\ 195--237).
\newblock
\APACaddressPublisher{}{Elsevier}.
\PrintBackRefs{\CurrentBib}

\bibitem [\protect \citeauthoryear {%
Baxter%
\ \BBA {} King%
}{%
Baxter%
\ \BBA {} King%
}{%
{\protect \APACyear {1999}}%
}]{%
baxter1999measuring}
\APACinsertmetastar {%
baxter1999measuring}%
\begin{APACrefauthors}%
Baxter, M.%
\BCBT {}\ \BBA {} King, R\BPBI G.%
\end{APACrefauthors}%
\unskip\
\newblock
\APACrefYearMonthDay{1999}{}{}.
\newblock
{\BBOQ}\APACrefatitle {Measuring business cycles: approximate band-pass filters
  for economic time series} {Measuring business cycles: approximate band-pass
  filters for economic time series}.{\BBCQ}
\newblock
\APACjournalVolNumPages{Review of Economics and Statistics}{81}{4}{575--593}.
\PrintBackRefs{\CurrentBib}

\bibitem [\protect \citeauthoryear {%
Brillinger%
}{%
Brillinger%
}{%
{\protect \APACyear {1981}}%
}]{%
brillinger1981time}
\APACinsertmetastar {%
brillinger1981time}%
\begin{APACrefauthors}%
Brillinger, D\BPBI R.%
\end{APACrefauthors}%
\unskip\
\newblock
\APACrefYear{1981}.
\newblock
\APACrefbtitle {Time series: data analysis and theory} {Time series: data
  analysis and theory}\ (\BVOL~36).
\newblock
\APACaddressPublisher{}{Siam}.
\PrintBackRefs{\CurrentBib}

\bibitem [\protect \citeauthoryear {%
Camacho%
\ \BBA {} Perez-Quiros%
}{%
Camacho%
\ \BBA {} Perez-Quiros%
}{%
{\protect \APACyear {2010}}%
}]{%
camacho2010introducing}
\APACinsertmetastar {%
camacho2010introducing}%
\begin{APACrefauthors}%
Camacho, M.%
\BCBT {}\ \BBA {} Perez-Quiros, G.%
\end{APACrefauthors}%
\unskip\
\newblock
\APACrefYearMonthDay{2010}{}{}.
\newblock
{\BBOQ}\APACrefatitle {Introducing the euro-sting: Short-term indicator of euro
  area growth} {Introducing the euro-sting: Short-term indicator of euro area
  growth}.{\BBCQ}
\newblock
\APACjournalVolNumPages{Journal of Applied Econometrics}{25}{4}{663--694}.
\PrintBackRefs{\CurrentBib}

\bibitem [\protect \citeauthoryear {%
Christiano%
\ \BBA {} Fitzgerald%
}{%
Christiano%
\ \BBA {} Fitzgerald%
}{%
{\protect \APACyear {2003}}%
}]{%
christiano2003band}
\APACinsertmetastar {%
christiano2003band}%
\begin{APACrefauthors}%
Christiano, L\BPBI J.%
\BCBT {}\ \BBA {} Fitzgerald, T\BPBI J.%
\end{APACrefauthors}%
\unskip\
\newblock
\APACrefYearMonthDay{2003}{}{}.
\newblock
{\BBOQ}\APACrefatitle {The band pass filter} {The band pass filter}.{\BBCQ}
\newblock
\APACjournalVolNumPages{International Economic Review}{44}{2}{435--465}.
\PrintBackRefs{\CurrentBib}

\bibitem [\protect \citeauthoryear {%
Diebold%
, Gunther%
\BCBL {}\ \BBA {} Tay%
}{%
Diebold%
\ \protect \BOthers {.}}{%
{\protect \APACyear {1998}}%
}]{%
diebold1998evaluating}
\APACinsertmetastar {%
diebold1998evaluating}%
\begin{APACrefauthors}%
Diebold, F\BPBI X.%
, Gunther, T.%
\BCBL {}\ \BBA {} Tay, A.%
\end{APACrefauthors}%
\unskip\
\newblock
\APACrefYearMonthDay{1998}{}{}.
\newblock
{\BBOQ}\APACrefatitle {Evaluating density forecasts, with evaluation to risk
  management} {Evaluating density forecasts, with evaluation to risk
  management}.{\BBCQ}
\newblock
\APACjournalVolNumPages{International Economic Review}{}{}{}.
\PrintBackRefs{\CurrentBib}

\bibitem [\protect \citeauthoryear {%
Diebold%
\ \BBA {} Mariano%
}{%
Diebold%
\ \BBA {} Mariano%
}{%
{\protect \APACyear {2002}}%
}]{%
diebold2002comparing}
\APACinsertmetastar {%
diebold2002comparing}%
\begin{APACrefauthors}%
Diebold, F\BPBI X.%
\BCBT {}\ \BBA {} Mariano, R\BPBI S.%
\end{APACrefauthors}%
\unskip\
\newblock
\APACrefYearMonthDay{2002}{}{}.
\newblock
{\BBOQ}\APACrefatitle {Comparing predictive accuracy} {Comparing predictive
  accuracy}.{\BBCQ}
\newblock
\APACjournalVolNumPages{Journal of Business \& economic
  statistics}{20}{1}{134--144}.
\PrintBackRefs{\CurrentBib}

\bibitem [\protect \citeauthoryear {%
Durbin%
\ \BBA {} Koopman%
}{%
Durbin%
\ \BBA {} Koopman%
}{%
{\protect \APACyear {2012}}%
}]{%
durbin2012time}
\APACinsertmetastar {%
durbin2012time}%
\begin{APACrefauthors}%
Durbin, J.%
\BCBT {}\ \BBA {} Koopman, S\BPBI J.%
\end{APACrefauthors}%
\unskip\
\newblock
\APACrefYear{2012}.
\newblock
\APACrefbtitle {Time series analysis by state space methods} {Time series
  analysis by state space methods}\ (\BVOL~38).
\newblock
\APACaddressPublisher{}{Oxford University Press}.
\PrintBackRefs{\CurrentBib}

\bibitem [\protect \citeauthoryear {%
Engle%
}{%
Engle%
}{%
{\protect \APACyear {1974}}%
}]{%
engle1974band}
\APACinsertmetastar {%
engle1974band}%
\begin{APACrefauthors}%
Engle, R\BPBI F.%
\end{APACrefauthors}%
\unskip\
\newblock
\APACrefYearMonthDay{1974}{}{}.
\newblock
{\BBOQ}\APACrefatitle {Band spectrum regression} {Band spectrum
  regression}.{\BBCQ}
\newblock
\APACjournalVolNumPages{International Economic Review}{}{}{1--11}.
\PrintBackRefs{\CurrentBib}

\bibitem [\protect \citeauthoryear {%
Forni%
, Giannone%
, Lippi%
\BCBL {}\ \BBA {} Reichlin%
}{%
Forni%
\ \protect \BOthers {.}}{%
{\protect \APACyear {2009}}%
}]{%
forni2009opening}
\APACinsertmetastar {%
forni2009opening}%
\begin{APACrefauthors}%
Forni, M.%
, Giannone, D.%
, Lippi, M.%
\BCBL {}\ \BBA {} Reichlin, L.%
\end{APACrefauthors}%
\unskip\
\newblock
\APACrefYearMonthDay{2009}{}{}.
\newblock
{\BBOQ}\APACrefatitle {Opening the black box: Structural factor models with
  large cross sections} {Opening the black box: Structural factor models with
  large cross sections}.{\BBCQ}
\newblock
\APACjournalVolNumPages{Econometric Theory}{25}{5}{1319--1347}.
\PrintBackRefs{\CurrentBib}

\bibitem [\protect \citeauthoryear {%
Forni%
, Hallin%
, Lippi%
\BCBL {}\ \BBA {} Reichlin%
}{%
Forni%
\ \protect \BOthers {.}}{%
{\protect \APACyear {2000}}%
}]{%
forni2000generalized}
\APACinsertmetastar {%
forni2000generalized}%
\begin{APACrefauthors}%
Forni, M.%
, Hallin, M.%
, Lippi, M.%
\BCBL {}\ \BBA {} Reichlin, L.%
\end{APACrefauthors}%
\unskip\
\newblock
\APACrefYearMonthDay{2000}{}{}.
\newblock
{\BBOQ}\APACrefatitle {The generalized dynamic-factor model: Identification and
  estimation} {The generalized dynamic-factor model: Identification and
  estimation}.{\BBCQ}
\newblock
\APACjournalVolNumPages{Review of Economics and Statistics}{82}{4}{540--554}.
\PrintBackRefs{\CurrentBib}

\bibitem [\protect \citeauthoryear {%
Forni%
, Hallin%
, Lippi%
\BCBL {}\ \BBA {} Reichlin%
}{%
Forni%
\ \protect \BOthers {.}}{%
{\protect \APACyear {2005}}%
}]{%
forni2005generalized}
\APACinsertmetastar {%
forni2005generalized}%
\begin{APACrefauthors}%
Forni, M.%
, Hallin, M.%
, Lippi, M.%
\BCBL {}\ \BBA {} Reichlin, L.%
\end{APACrefauthors}%
\unskip\
\newblock
\APACrefYearMonthDay{2005}{}{}.
\newblock
{\BBOQ}\APACrefatitle {The generalized dynamic factor model: one-sided
  estimation and forecasting} {The generalized dynamic factor model: one-sided
  estimation and forecasting}.{\BBCQ}
\newblock
\APACjournalVolNumPages{Journal of the American Statistical
  Association}{100}{471}{830--840}.
\PrintBackRefs{\CurrentBib}

\bibitem [\protect \citeauthoryear {%
Forni%
, Hallin%
, Lippi%
\BCBL {}\ \BBA {} Zaffaroni%
}{%
Forni%
\ \protect \BOthers {.}}{%
{\protect \APACyear {2017}}%
}]{%
forni2017dynamic}
\APACinsertmetastar {%
forni2017dynamic}%
\begin{APACrefauthors}%
Forni, M.%
, Hallin, M.%
, Lippi, M.%
\BCBL {}\ \BBA {} Zaffaroni, P.%
\end{APACrefauthors}%
\unskip\
\newblock
\APACrefYearMonthDay{2017}{}{}.
\newblock
{\BBOQ}\APACrefatitle {Dynamic factor models with infinite-dimensional factor
  space: Asymptotic analysis} {Dynamic factor models with infinite-dimensional
  factor space: Asymptotic analysis}.{\BBCQ}
\newblock
\APACjournalVolNumPages{Journal of Econometrics}{199}{1}{74--92}.
\PrintBackRefs{\CurrentBib}

\bibitem [\protect \citeauthoryear {%
Foroni%
\ \BBA {} Marcellino%
}{%
Foroni%
\ \BBA {} Marcellino%
}{%
{\protect \APACyear {2014}}%
}]{%
foroni2014comparison}
\APACinsertmetastar {%
foroni2014comparison}%
\begin{APACrefauthors}%
Foroni, C.%
\BCBT {}\ \BBA {} Marcellino, M.%
\end{APACrefauthors}%
\unskip\
\newblock
\APACrefYearMonthDay{2014}{}{}.
\newblock
{\BBOQ}\APACrefatitle {A comparison of mixed frequency approaches for
  nowcasting Euro area macroeconomic aggregates} {A comparison of mixed
  frequency approaches for nowcasting euro area macroeconomic
  aggregates}.{\BBCQ}
\newblock
\APACjournalVolNumPages{International Journal of Forecasting}{30}{3}{554--568}.
\PrintBackRefs{\CurrentBib}

\bibitem [\protect \citeauthoryear {%
Frale%
, Marcellino%
, Mazzi%
\BCBL {}\ \BBA {} Proietti%
}{%
Frale%
\ \protect \BOthers {.}}{%
{\protect \APACyear {2011}}%
}]{%
frale2011euromind}
\APACinsertmetastar {%
frale2011euromind}%
\begin{APACrefauthors}%
Frale, C.%
, Marcellino, M.%
, Mazzi, G\BPBI L.%
\BCBL {}\ \BBA {} Proietti, T.%
\end{APACrefauthors}%
\unskip\
\newblock
\APACrefYearMonthDay{2011}{}{}.
\newblock
{\BBOQ}\APACrefatitle {EUROMIND: a monthly indicator of the euro area economic
  conditions} {Euromind: a monthly indicator of the euro area economic
  conditions}.{\BBCQ}
\newblock
\APACjournalVolNumPages{Journal of the Royal Statistical Society: Series A
  (Statistics in Society)}{174}{2}{439--470}.
\PrintBackRefs{\CurrentBib}

\bibitem [\protect \citeauthoryear {%
Franklin%
}{%
Franklin%
}{%
{\protect \APACyear {2000}}%
}]{%
franklin2000matrix}
\APACinsertmetastar {%
franklin2000matrix}%
\begin{APACrefauthors}%
Franklin, J\BPBI N.%
\end{APACrefauthors}%
\unskip\
\newblock
\APACrefYear{2000}.
\newblock
\APACrefbtitle {Matrix theory} {Matrix theory}.
\newblock
\APACaddressPublisher{}{Dover Publications}.
\PrintBackRefs{\CurrentBib}

\bibitem [\protect \citeauthoryear {%
Ghysels%
, Sinko%
\BCBL {}\ \BBA {} Valkanov%
}{%
Ghysels%
\ \protect \BOthers {.}}{%
{\protect \APACyear {2007}}%
}]{%
ghysels2007midas}
\APACinsertmetastar {%
ghysels2007midas}%
\begin{APACrefauthors}%
Ghysels, E.%
, Sinko, A.%
\BCBL {}\ \BBA {} Valkanov, R.%
\end{APACrefauthors}%
\unskip\
\newblock
\APACrefYearMonthDay{2007}{}{}.
\newblock
{\BBOQ}\APACrefatitle {MIDAS regressions: Further results and new directions}
  {Midas regressions: Further results and new directions}.{\BBCQ}
\newblock
\APACjournalVolNumPages{Econometric reviews}{26}{1}{53--90}.
\PrintBackRefs{\CurrentBib}

\bibitem [\protect \citeauthoryear {%
Giannone%
, Reichlin%
\BCBL {}\ \BBA {} Small%
}{%
Giannone%
\ \protect \BOthers {.}}{%
{\protect \APACyear {2008}}%
}]{%
giannone2008nowcasting}
\APACinsertmetastar {%
giannone2008nowcasting}%
\begin{APACrefauthors}%
Giannone, D.%
, Reichlin, L.%
\BCBL {}\ \BBA {} Small, D.%
\end{APACrefauthors}%
\unskip\
\newblock
\APACrefYearMonthDay{2008}{}{}.
\newblock
{\BBOQ}\APACrefatitle {Nowcasting: The real-time informational content of
  macroeconomic data} {Nowcasting: The real-time informational content of
  macroeconomic data}.{\BBCQ}
\newblock
\APACjournalVolNumPages{Journal of monetary economics}{55}{4}{665--676}.
\PrintBackRefs{\CurrentBib}

\bibitem [\protect \citeauthoryear {%
Hallin%
, Lippi%
, Barigozzi%
, Forni%
\BCBL {}\ \BBA {} Zaffaroni%
}{%
Hallin%
\ \protect \BOthers {.}}{%
{\protect \APACyear {2020}}%
}]{%
hallin2020time}
\APACinsertmetastar {%
hallin2020time}%
\begin{APACrefauthors}%
Hallin, M.%
, Lippi, M.%
, Barigozzi, M.%
, Forni, M.%
\BCBL {}\ \BBA {} Zaffaroni, P.%
\end{APACrefauthors}%
\unskip\
\newblock
\APACrefYear{2020}.
\newblock
\APACrefbtitle {Time Series in High Dimensions: The General Dynamic Factor
  Model} {Time series in high dimensions: The general dynamic factor model}.
\newblock
\APACaddressPublisher{}{World Scientific}.
\PrintBackRefs{\CurrentBib}

\bibitem [\protect \citeauthoryear {%
Hallin%
\ \BBA {} Li{\v{s}}ka%
}{%
Hallin%
\ \BBA {} Li{\v{s}}ka%
}{%
{\protect \APACyear {2007}}%
}]{%
hallin2007determining}
\APACinsertmetastar {%
hallin2007determining}%
\begin{APACrefauthors}%
Hallin, M.%
\BCBT {}\ \BBA {} Li{\v{s}}ka, R.%
\end{APACrefauthors}%
\unskip\
\newblock
\APACrefYearMonthDay{2007}{}{}.
\newblock
{\BBOQ}\APACrefatitle {Determining the number of factors in the general dynamic
  factor model} {Determining the number of factors in the general dynamic
  factor model}.{\BBCQ}
\newblock
\APACjournalVolNumPages{Journal of the American Statistical
  Association}{102}{478}{603--617}.
\PrintBackRefs{\CurrentBib}

\bibitem [\protect \citeauthoryear {%
Kuzin%
, Marcellino%
\BCBL {}\ \BBA {} Schumacher%
}{%
Kuzin%
\ \protect \BOthers {.}}{%
{\protect \APACyear {2013}}%
}]{%
kuzin2013pooling}
\APACinsertmetastar {%
kuzin2013pooling}%
\begin{APACrefauthors}%
Kuzin, V.%
, Marcellino, M.%
\BCBL {}\ \BBA {} Schumacher, C.%
\end{APACrefauthors}%
\unskip\
\newblock
\APACrefYearMonthDay{2013}{}{}.
\newblock
{\BBOQ}\APACrefatitle {Pooling versus model selection for nowcasting GDP with
  many predictors: Empirical evidence for six industrialized countries}
  {Pooling versus model selection for nowcasting gdp with many predictors:
  Empirical evidence for six industrialized countries}.{\BBCQ}
\newblock
\APACjournalVolNumPages{Journal of Applied Econometrics}{28}{3}{392--411}.
\PrintBackRefs{\CurrentBib}

\bibitem [\protect \citeauthoryear {%
Mariano%
\ \BBA {} Murasawa%
}{%
Mariano%
\ \BBA {} Murasawa%
}{%
{\protect \APACyear {2003}}%
}]{%
mariano2003new}
\APACinsertmetastar {%
mariano2003new}%
\begin{APACrefauthors}%
Mariano, R\BPBI S.%
\BCBT {}\ \BBA {} Murasawa, Y.%
\end{APACrefauthors}%
\unskip\
\newblock
\APACrefYearMonthDay{2003}{}{}.
\newblock
{\BBOQ}\APACrefatitle {A new coincident index of business cycles based on
  monthly and quarterly series} {A new coincident index of business cycles
  based on monthly and quarterly series}.{\BBCQ}
\newblock
\APACjournalVolNumPages{Journal of applied Econometrics}{18}{4}{427--443}.
\PrintBackRefs{\CurrentBib}

\bibitem [\protect \citeauthoryear {%
Marshall%
, Olkin%
\BCBL {}\ \BBA {} Arnold%
}{%
Marshall%
\ \protect \BOthers {.}}{%
{\protect \APACyear {2010}}%
}]{%
marshall2010inequalities}
\APACinsertmetastar {%
marshall2010inequalities}%
\begin{APACrefauthors}%
Marshall, A\BPBI W.%
, Olkin, I.%
\BCBL {}\ \BBA {} Arnold, B\BPBI C.%
\end{APACrefauthors}%
\unskip\
\newblock
\APACrefYear{2010}.
\newblock
\APACrefbtitle {Inequalities: theory of majorization and its applications.
  Second Edition.} {Inequalities: theory of majorization and its applications.
  second edition.}
\newblock
\APACaddressPublisher{}{Springer}.
\PrintBackRefs{\CurrentBib}

\bibitem [\protect \citeauthoryear {%
McCracken%
\ \BBA {} Ng%
}{%
McCracken%
\ \BBA {} Ng%
}{%
{\protect \APACyear {2016}}%
}]{%
mccracken2016fred}
\APACinsertmetastar {%
mccracken2016fred}%
\begin{APACrefauthors}%
McCracken, M\BPBI W.%
\BCBT {}\ \BBA {} Ng, S.%
\end{APACrefauthors}%
\unskip\
\newblock
\APACrefYearMonthDay{2016}{}{}.
\newblock
{\BBOQ}\APACrefatitle {FRED-MD: A monthly database for macroeconomic research}
  {Fred-md: A monthly database for macroeconomic research}.{\BBCQ}
\newblock
\APACjournalVolNumPages{Journal of Business \& Economic
  Statistics}{34}{4}{574--589}.
\PrintBackRefs{\CurrentBib}

\bibitem [\protect \citeauthoryear {%
Ng%
}{%
Ng%
}{%
{\protect \APACyear {2021}}%
}]{%
ng2021modeling}
\APACinsertmetastar {%
ng2021modeling}%
\begin{APACrefauthors}%
Ng, S.%
\end{APACrefauthors}%
\unskip\
\newblock
\APACrefYearMonthDay{2021}{}{}.
\newblock
\APACrefbtitle {Modeling macroeconomic variations after COVID-19} {Modeling
  macroeconomic variations after covid-19}\ \APACbVolEdTR{}{\BTR{}}.
\newblock
\APACaddressInstitution{}{National Bureau of Economic Research}.
\PrintBackRefs{\CurrentBib}

\bibitem [\protect \citeauthoryear {%
Partington%
\ \protect \BOthers {.}}{%
Partington%
\ \protect \BOthers {.}}{%
{\protect \APACyear {1997}}%
}]{%
partington1997interpolation}
\APACinsertmetastar {%
partington1997interpolation}%
\begin{APACrefauthors}%
Partington, J\BPBI R.%
\BCBT {}\ \BOthersPeriod {.}
\end{APACrefauthors}%
\unskip\
\newblock
\APACrefYear{1997}.
\newblock
\APACrefbtitle {Interpolation, identification, and sampling} {Interpolation,
  identification, and sampling}\ (\BNUM~17).
\newblock
\APACaddressPublisher{}{Oxford University Press}.
\PrintBackRefs{\CurrentBib}

\bibitem [\protect \citeauthoryear {%
Proietti%
}{%
Proietti%
}{%
{\protect \APACyear {2008}}%
}]{%
proietti2008model}
\APACinsertmetastar {%
proietti2008model}%
\begin{APACrefauthors}%
Proietti, T.%
\end{APACrefauthors}%
\unskip\
\newblock
\APACrefYearMonthDay{2008}{}{}.
\newblock
{\BBOQ}\APACrefatitle {On the Model-Based Interpretation of Filters and the
  Reliability of Trend--Cycle Estimates} {On the model-based interpretation of
  filters and the reliability of trend--cycle estimates}.{\BBCQ}
\newblock
\APACjournalVolNumPages{Econometric Reviews}{28}{1-3}{186--208}.
\PrintBackRefs{\CurrentBib}

\bibitem [\protect \citeauthoryear {%
Rao%
}{%
Rao%
}{%
{\protect \APACyear {1974}}%
}]{%
rao1974projectors}
\APACinsertmetastar {%
rao1974projectors}%
\begin{APACrefauthors}%
Rao, C\BPBI R.%
\end{APACrefauthors}%
\unskip\
\newblock
\APACrefYearMonthDay{1974}{}{}.
\newblock
{\BBOQ}\APACrefatitle {Projectors, generalized inverses and the BLUE's}
  {Projectors, generalized inverses and the blue's}.{\BBCQ}
\newblock
\APACjournalVolNumPages{Journal of the Royal Statistical Society: Series B
  (Methodological)}{36}{3}{442--448}.
\PrintBackRefs{\CurrentBib}

\bibitem [\protect \citeauthoryear {%
Rao%
\ \BBA {} Mitra%
}{%
Rao%
\ \BBA {} Mitra%
}{%
{\protect \APACyear {1971}}%
}]{%
rao1971generalized}
\APACinsertmetastar {%
rao1971generalized}%
\begin{APACrefauthors}%
Rao, C\BPBI R.%
\BCBT {}\ \BBA {} Mitra, S\BPBI K.%
\end{APACrefauthors}%
\unskip\
\newblock
\APACrefYear{1971}.
\newblock
\APACrefbtitle {Generalized Inverse of Matrices and Its Applications}
  {Generalized inverse of matrices and its applications}.
\newblock
\APACaddressPublisher{}{Wiley}.
\PrintBackRefs{\CurrentBib}

\bibitem [\protect \citeauthoryear {%
Rossi%
\ \BBA {} Sekhposyan%
}{%
Rossi%
\ \BBA {} Sekhposyan%
}{%
{\protect \APACyear {2019}}%
}]{%
rossi2019alternative}
\APACinsertmetastar {%
rossi2019alternative}%
\begin{APACrefauthors}%
Rossi, B.%
\BCBT {}\ \BBA {} Sekhposyan, T.%
\end{APACrefauthors}%
\unskip\
\newblock
\APACrefYearMonthDay{2019}{}{}.
\newblock
{\BBOQ}\APACrefatitle {Alternative tests for correct specification of
  conditional predictive densities} {Alternative tests for correct
  specification of conditional predictive densities}.{\BBCQ}
\newblock
\APACjournalVolNumPages{Journal of Econometrics}{208}{2}{638--657}.
\PrintBackRefs{\CurrentBib}

\bibitem [\protect \citeauthoryear {%
Sayed%
\ \BBA {} Kailath%
}{%
Sayed%
\ \BBA {} Kailath%
}{%
{\protect \APACyear {2001}}%
}]{%
sayed2001survey}
\APACinsertmetastar {%
sayed2001survey}%
\begin{APACrefauthors}%
Sayed, A\BPBI H.%
\BCBT {}\ \BBA {} Kailath, T.%
\end{APACrefauthors}%
\unskip\
\newblock
\APACrefYearMonthDay{2001}{}{}.
\newblock
{\BBOQ}\APACrefatitle {A survey of spectral factorization methods} {A survey of
  spectral factorization methods}.{\BBCQ}
\newblock
\APACjournalVolNumPages{Numerical linear algebra with
  applications}{8}{6-7}{467--496}.
\PrintBackRefs{\CurrentBib}

\bibitem [\protect \citeauthoryear {%
Seber%
}{%
Seber%
}{%
{\protect \APACyear {2008}}%
}]{%
seber2008matrix}
\APACinsertmetastar {%
seber2008matrix}%
\begin{APACrefauthors}%
Seber, G\BPBI A.%
\end{APACrefauthors}%
\unskip\
\newblock
\APACrefYear{2008}.
\newblock
\APACrefbtitle {A matrix handbook for statisticians} {A matrix handbook for
  statisticians}\ (\BVOL~15).
\newblock
\APACaddressPublisher{}{John Wiley \& Sons}.
\PrintBackRefs{\CurrentBib}

\bibitem [\protect \citeauthoryear {%
Stock%
\ \BBA {} Watson%
}{%
Stock%
\ \BBA {} Watson%
}{%
{\protect \APACyear {2002}}%
{\protect \APACexlab {{\protect \BCnt {1}}}}}]{%
stock2002forecasting}
\APACinsertmetastar {%
stock2002forecasting}%
\begin{APACrefauthors}%
Stock, J\BPBI H.%
\BCBT {}\ \BBA {} Watson, M\BPBI W.%
\end{APACrefauthors}%
\unskip\
\newblock
\APACrefYearMonthDay{2002{\protect \BCnt {1}}}{}{}.
\newblock
{\BBOQ}\APACrefatitle {Forecasting using principal components from a large
  number of predictors} {Forecasting using principal components from a large
  number of predictors}.{\BBCQ}
\newblock
\APACjournalVolNumPages{Journal of the American Statistical
  Association}{97}{460}{1167--1179}.
\PrintBackRefs{\CurrentBib}

\bibitem [\protect \citeauthoryear {%
Stock%
\ \BBA {} Watson%
}{%
Stock%
\ \BBA {} Watson%
}{%
{\protect \APACyear {2002}}%
{\protect \APACexlab {{\protect \BCnt {2}}}}}]{%
stock2002macroeconomic}
\APACinsertmetastar {%
stock2002macroeconomic}%
\begin{APACrefauthors}%
Stock, J\BPBI H.%
\BCBT {}\ \BBA {} Watson, M\BPBI W.%
\end{APACrefauthors}%
\unskip\
\newblock
\APACrefYearMonthDay{2002{\protect \BCnt {2}}}{}{}.
\newblock
{\BBOQ}\APACrefatitle {Macroeconomic forecasting using diffusion indexes}
  {Macroeconomic forecasting using diffusion indexes}.{\BBCQ}
\newblock
\APACjournalVolNumPages{Journal of Business \& Economic
  Statistics}{20}{2}{147--162}.
\PrintBackRefs{\CurrentBib}

\bibitem [\protect \citeauthoryear {%
Whittle%
}{%
Whittle%
}{%
{\protect \APACyear {1963}}%
}]{%
whittle1963prediction}
\APACinsertmetastar {%
whittle1963prediction}%
\begin{APACrefauthors}%
Whittle, P.%
\end{APACrefauthors}%
\unskip\
\newblock
\APACrefYear{1963}.
\newblock
\APACrefbtitle {Prediction and regulation by linear least-square methods}
  {Prediction and regulation by linear least-square methods}.
\newblock
\APACaddressPublisher{}{English Universities Press}.
\PrintBackRefs{\CurrentBib}

\end{thebibliography}

\clearpage
\appendix

\section{Proofs of theorems}
\subsection{Proof of Proposition \ref{prop:1} \label{app:prop1}}

Premultiplying both sides of (\ref{eq:dfm}) by  $\bM_x^{-\frac{1}{2}}\bS'$ yields
$\bx_{t}^* = \bchi^*_t + \bxi_t^*$, with $\bx_t^* = \bM_x^{-\frac{1}{2}}\bS' \bx_t$, $\bchi_t^* = \bM_x^{-\frac{1}{2}}\bS' \bchi_t$, $\bxi_t^* = \bM_x^{-\frac{1}{2}}\bS' \bxi_t$.
The optimal linear predictor  of $\bchi^*_t$  based on $\bx_t^*$ is
$$\begin{array}{lll}
\hat{\bchi}_t^* &=& \Cov(\bchi_t^*, \bx_t^*)\Var^{-1}(\bx_t^*) \bx_t^*\\
             &=& \bQ_\chi  \bx_t^*,\\
\end{array}$$
since $\Cov(\bchi_t^*, \bx_t^*)=\bQ_\chi$, $\Var(\bx_t^*) = \bI_n$ and $\E(\bchi_t^*\bxi_t^{*\prime})=\0$. The best (least squares) rank $r$ approximation to $\hat{\bchi}_t^*$ is thus $\tilde{\bchi}_t^* = \bZ_\chi^*\bM_\chi^{*} \bZ_\chi^{*\prime} \bx_t^*$.
The optimal rank $r$ estimator of $\bchi_t =\bS\bM_x^{\frac{1}{2}}\bchi_t^*$ is then obtained by premultiplying $\tilde{\bchi}_t^*$ by $\bS\bM_x^{\frac{1}{2}}$, giving
$$\begin{array}{lll}
\tilde{\bchi}_t &=& \bS\bM_x^{\frac{1}{2}} \bZ_\chi^*\bM_\chi^{*} \bZ_\chi^{*\prime} \bx_t^*\\
             &=&\bS\bM_x^{\frac{1}{2}} \bZ_\chi^*\bM_\chi^{*} \bZ_\chi^{*\prime} \bM_x^{-\frac{1}{2}}\bS'\bx_t \\
             &=&\bS\bM_x^{\frac{1}{2}}( \bM_x^{\frac{1}{2}}\bS'\bS\bM_x^{-\frac{1}{2}}) \bZ_\chi^*\bM_\chi^{*} \bZ_\chi^{*\prime} \bM_x^{-\frac{1}{2}}\bS'\bx_t \\
             &=&(\bS\bM_x\bS') (\bS\bM_x^{-\frac{1}{2}} \bZ_\chi^*)\bM_\chi^{*} (\bZ_\chi^{*\prime} \bM_x^{-\frac{1}{2}} \bS')\bx_t \\
             &=&\bGamma_0^x \bZ_\chi\bM_\chi^{*} \bZ_\chi^{\prime} \bx_t. \\
\end{array}$$

Finally, the proof of (\ref{eq:msechi}) is direct in view of the identity $\bGamma_0^x \bZ_\chi\bM_\chi^{*} = \bGamma_0^{\chi}\bZ_\chi$.

\subsection{Proof of Proposition \ref{prop:2} \label{app:prop2}}
We start by proving that  $\bM_\chi^{*}\rightarrow \bI_r$:  $\bGamma_0^\xi=\bGamma_0^x-\bGamma_0^\chi>0$ and  Weyl's inequality imply
$$\mu_{j}^x-\mu_1^{\xi}\leq \mu_j^\chi\leq \mu_j^x, j=1, \ldots, r,$$
see \shortciteA[Section 6.7]{franklin2000matrix}. Since $\mu_{1}^\xi=O(1)$ as implied by Assumption 1.c,  $\mu_{j}^\chi/\mu_j^x\rightarrow 1$ as $n\rightarrow \infty$.
Moreover, $\bQ_\chi = \bI_n -\bQ_\xi$ implies that $\mu_j^{\chi*}\leq 1$. Then, by Proposition A.1.a. in \shortciteA{marshall2010inequalities},
$\mu_{j}^{\chi*}\geq \frac{\mu_{j}^{\chi}}{\mu_{1}^{x}}\geq \frac{\mu_{j}^{\chi}}{\mu_{j}^{x}}$, which converges to 1.

The mean square error  matrix in (\ref{eq:msechi}) converges to a zero matrix. This can be seen by projecting the estimation error in the space of the common component (the projection on the space spanned by the idiosyncratic component being identically zero): $\bZ_\chi'\E\{(\bchi_t-\tilde{\bchi}_t)(\bchi_t-\tilde{\bchi}_t)' \}\bZ_\chi= (\bI_r -\bM_\chi^*)\bM_\chi^*$, which converges to a zero matrix as $n\rightarrow \infty$, as implied by $\bM_\chi^{*}\rightarrow \bI_r$.

\subsection{Proof of Proposition \ref{prop:3} \label{app:prop3}}

The standardized principal components of $\bx_t$ decompose as follows: $\bx_t^* = \bphi_t^*+\bpsi_t^*+\bxi_t^*$, where $\bphi_t^* = \bM_x^{-\frac{1}{2}} \bS'\bphi_t$, and, similarly, $\bpsi_t^* = \bM_x^{-\frac{1}{2}} \bS'\bpsi_t$. The optimal linear predictor of $\bphi_t^*$ is $\hat{\bphi}_t^* = \bQ_{\phi} \bx_t^*$. The remaining steps of the proof are identical to those of Proposition \ref{prop:1}.

\subsection{Proof of Proposition \ref{prop:4} \label{app:prop4}}

By Assumption \ref{ass:2}, $r=r_\phi+r_\psi$. Now,   Weyl's inequalities, $\mu_j^\chi \geq \mu_l^\phi+\mu_m^\psi$, for $l+m-1\leq j \leq r$, and $\mu_j^\chi \leq \mu_i^\phi+\mu_k^\psi$, for $i+k-1\geq j$,   imply that the eigenvalues  of $\bGamma_0^\phi$ of order higher than $r_\phi$ must be bounded, whereas the low order eigenvalues diverge as $n\rightarrow 1$. As a consequence, $\bM_\phi^*$ converges to the identity matrix of order $r_\phi$.
The mean square error  matrix in (\ref{eq:msephi}) converges to a zero matrix. This can be seen by projecting the estimation error in the space of the common component (the projection on the space spanned by the idiosyncratic component being identically zero): $\bZ_\phi'\E\{(\bphi_t-\tilde{\bphi}_t)(\bphi_t-\tilde{\bphi}_t)' \}\bZ_\phi= (\bI_{r_\phi} -\bM_\phi^*)\bM_\phi^*$, which converges to a zero matrix for $n\rightarrow \infty$, as implied by $\bM_\phi^{*}\rightarrow \bI_{r_\phi}$.

\clearpage

\section{Construction of  the target variable: optimal interpolation and band-pass filtering \label{sec:target}}
\setcounter{equation}{0}
\renewcommand{\theequation}{B\arabic{equation}}
\setcounter{figure}{0}
\renewcommand{\thefigure}{B\arabic{equation}}
The monthly target variable is the historical M2LR component of GDP quarterly growth. This results from applying a \shortciteA{baxter1999measuring} type approximation
to the optimally interpolated monthly series obtained from the original quarterly growth series.
Let $y_t$ denote the logarithm of unobserved monthly GDP, and let us assume that the observed GDP quarterly growth rate is a systematic sample of the monthly process
\begin{equation}\begin{array}{lll} g_t &=& (1+L+L^2)y_t - (1+L+L^2)y_{t-3} \\
                    &=& (1+L+L^2)^2\Delta y_t.
\end{array}
\label{eq:gt}
\end{equation}
Clearly, this is valid only up to an approximation, since the logarithm of the sum of monthly GDP for three consecutive months does not equal the sum of the logarithms.
The observed time series is $g_{3\tau}, \tau =1,2, \ldots, \lfloor T/3\rfloor$, i.e., every third value of $g_t$ is available.

If $\sigma(\theta), \theta \in [0,\pi]$, denotes the spectral density of $\Delta y_t$, that of $g_t$ is
$9\frac{\sin^4(3\theta/2)}{\sin^4(\theta/2)}\sigma(\theta)$.
The spectral density of $g_\tau$, denoted $f_q(\theta)$, is obtained by applying the folding formula \shortcite[p. 179]{brillinger1981time}:
$$f_q(\theta) = 3\sum_{j=0}^2 \frac{\sin^4(3\theta_j/2)}{\sin^4(\theta_j/2)}\sigma(\theta_j),\;\;\; \theta_j = \frac{\theta + 2\pi j}{3}.$$

The Whittaker-Kotel'nikov-Shannon sampling theorem, see \shortciteA[Theorem 7.2.2]{partington1997interpolation} can be used for the optimal interpolation at the monthly frequency of the low-pass component of GDP growth. If $g_t$ is bandlimited to $\theta_0 \leq \pi/3$, meaning that $s(\theta)=0$ for $\theta \in(-\theta_0, \theta_0)$ mod $2\pi$, then no aliasing occurs, i.e., $f_q(\theta)=\sigma(\theta\frac{\theta_0}{\pi})$, and $g_t$, by the Whittaker-Kotel'nikov-Shannon sampling theorem, can be perfectly reconstructed from its values sampled at discrete time points spaced 3 time units apart:
$$g_t = \sum_\tau g_{3\tau} \frac{\sin(\theta_0 (t-3\tau))}{\theta_0(t-3\tau)}. $$

Given that we are interested in the low-pass component of $g_t$ with cut-off frequency $\theta_c=\pi/6$, corresponding to a period of one year, we could first interpolate $g_t$, obtaining
$$\hat{g}_t = \sum_\tau g_{3\tau} \frac{\sin(   \pi (t-3\tau)/3)}{\pi(t-3\tau)/3},$$
and subsequently apply the optimal low-pass filter with cutoff frequency $\pi/6$:
$$\tilde{g}_t = W(L) \hat{q}_t, \;\;\; W(L) = \frac{1}{6}+\sum_{j=1}^\infty \frac{\sin(\pi j/6)}{\pi j/6} (L^j+L^{-j}).$$

Equivalently, we could first evaluate the low-pass component of the quarterly growth rate
with  cut-off frequency $\theta_c=\pi/2$ corresponding to a period of one year for quarterly data,
$$\tilde{g}_{3\tau} = W_\tau(L) {q}_{3t},\;\; W_\tau(L) = \frac{1}{2}+\sum_{j=1}^\infty \frac{\sin(   \pi j/2)}{\pi j/2} (L^j+L^{-j}), $$
where now $L$ acts on the index $\tau$, i.e. $L^jg_{3\tau} = g_{3(\tau-j)}$, and subsequently interpolate $\tilde{g}_{3\tau}$ to the monthly frequency by
$$\tilde{g}_t = \sum_\tau \tilde{q}_{3\tau} \frac{\sin(   \pi (t-3\tau)/3)}{\pi(t-3\tau)/3}.$$
The target is obtained by truncating the optimal band-pass filter at lead and lag 36, which amount to using three years of past and future monthly observations.

A second relevant target is represented by the M2LR growth component with annual horizon.
The observed annual growth rate is a systematic sample of the monthly process
$$\begin{array}{lll}a_t  &=&  (1+L+L^2)y_t-(1+L+L^2)y_{t-12} \\
                    &=& (1+L+L^2)(1+L+\cdots+L^{11})\Delta y_t.
\end{array}$$
The observed time series is $a_{3\tau}, \tau =1,2, \ldots, \lfloor T/3\rfloor$, i.e., every third value of $a_t$ is available.

The spectral density of $W_t$ is $36\frac{\sin^2(3\theta/2)\sin^2(12\theta/2)}{\sin^4(\theta/2)}\sigma(\theta)$.
That of $a_\tau$, denoted $f_a(\theta)$, is again obtained by applying the folding formula:
$$f_a(\theta) = 12\sum_{j=0}^2 \frac{\sin^2(3\theta_j/2)\sin^2(12\theta/2)}{\sin^4(\theta_j/2)}\sigma(\theta_j),\;\;\; \theta_j = \frac{\theta + 2\pi j}{3}, j=0,1,2.$$

Applying the Whittaker-Kotel'nikov-Shannon sampling theorem and assuming that the process is band-limited with cut-off frequency $\theta_c=\pi/6$, the interpolant of  $a_t$ is
$$\hat{a}_t = \sum_\tau a_{3\tau} \frac{\sin(   \pi (t-3\tau)/3)}{\pi(t-3\tau)/3};$$
hence, we apply the optimal low-pass filter with cutoff frequency $\pi/6$ to $\hat{a}_t$, giving the M2LR component of the year-on-year growth,
$\tilde{a}_t = W(L) \hat{a}_t.$

\subsection{ARMA Model-based band-pass filtering \label{sec:mbfilt}}

An alternative way of constructing the target is via the model-based band-pass filtering approach proposed by \shortciteA{proietti2008model}.
Suppose that the monthly growth rate $g_t$ in (\ref{eq:gt}), which is only observed every third month,  has the following stationary and invertible ARMA($p$, $q$) representation
$\alpha(L) (g_t -\mu) = \beta(L)\epsilon_t$, $\epsilon_t \sim \WN(0, \sigma^2)$, $\alpha(L)=1-\alpha_1L-\cdots-\alpha_p L^p,$
$\beta(L)=1+\beta_1L+\cdots+\beta_q L^q.$

As in section \ref{sec:smooth}, consider the orthogonal decomposition of the reduced form white noise process
$$\epsilon_t = \frac{(1+L)^s}{\varphi(L)}\eta_t + \sqrt{\varsigma}\frac{(1-L)^s}{\varphi(L)}\zeta_t,$$
where $\eta_t\sim\WN(0,1), \zeta_t\sim\WN(0, 1),$ and  $\E(\eta_t\zeta_{t-j})=0, \forall j\in \mathbb{Z},$   $\varphi(L)$ is a scalar polynomial satisfying
\begin{equation}
\varphi(L)\varphi(L^{-1}) = (1+L)^s(1+L^{-1})^s+\varsigma (1-L)^s(1-L^{-1})^s, \label{eq:varphi}
\end{equation}
and $\varsigma$ is related to the cutoff frequency $\theta_c=\pi/6$ by $\varsigma = \left(\frac{1+\cos\theta_c}{1-\cos\theta_c}\right)^s.$
The factorization of the right hand side of (\ref{eq:varphi}) is made possible by the fact that its Fourier transform is strictly positive, see
\shortciteA{sayed2001survey}, who also survey algorithms for computing $\varphi(L)$ from $s$ and $\varsigma$.

We obtain the  decomposition of $g_t=\ell_t +  h_t$,  where $l_t$ is the low-pass component with cut-off frequency $\pi/6$ and $h_t$ is the high-pass component, which have respectively the following strictly non-invertible ARMA($p+s$, $q+s$) representations:
$$\ell_t = \mu + \frac{\beta(L)(1+L)^s}{\alpha(L)\varphi(L)}\eta_t,  \;\;\; h_t =  \frac{\beta(L)(1-L)^s}{\alpha(L)\varphi(L)}\eta_t, $$
(notice that the spectral density of $\ell_t$ is zero at the $\pi$ frequency, whereas that of $h_t$ has $s$ zeroes at the zero frequency).

The minimum mean square linear estimator of $\ell_t$ based on a doubly infinite sample $g_{t-j}, j= -\infty, -1,0,1,\ldots, \infty$ is $\hat{\ell}_t = w_\ell(L)g_t$, where the
Wiener-Kolmogorov filter for estimating the low-pass component is
$$ w_\ell(L) = \frac{(1+L)^s(1+L^{-1})^s}{\varphi(L)\varphi(L^{-1})},$$
see \shortciteA{whittle1963prediction}. The gain of the filter declines monotonically from 1 to zero as $\theta$ ranges from 0  to $\pi$ and $w(\theta_c)=1/2$.

Given a systematic sample of $g_t$ with only quarterly observations, the estimation of $\ell_t$ is carried out by representing the decomposition as a state space model and applying the Kalman filter and smoother, see \shortciteA{durbin2012time}. Hence, interpolation and filtering are performed by state space methods.

For U.S. GDP we assumed an AR(1) model for $g_t$ and set $s=6$. Figure \ref{fig:mbtarget} compares the model-based estimate of the low-pass component (solid blue line) with the Baxter and King estimate of the low-pass component computed on the interpolated series (which results from the sum of the smoothed estimates of $\ell_t$ and $h_t$). The estimates are virtually identical, the only difference being that the Kalman filter and smoother are also available at the beginning and at the end of the sample (first and last 36 time points).

\begin{figure}[bht]
\caption{Comparison of alternative targets for the M2LR component of GDP growth (1947:Q2-2019:Q4). Original quarterly time series (circles), monthly interpolated time series (dashed line), model-based estimate of the low-pass component (solid blue line), and Baxter and King estimate of the low-pass component (red solid line). \label{fig:mbtarget} }
\centering
\includegraphics[width=15cm]{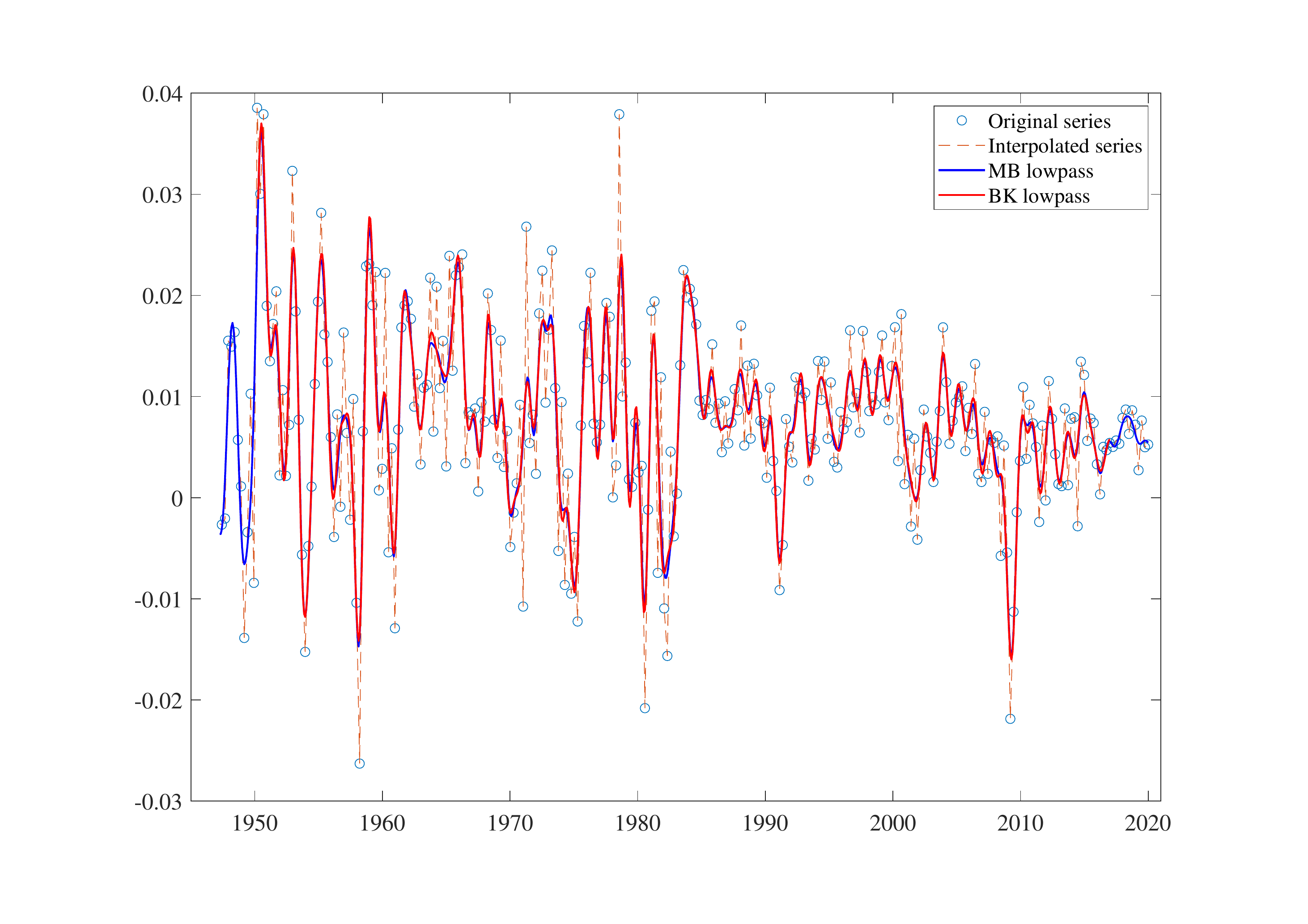}
\end{figure}

\clearpage
\section{List of Series \label{app:series}}
We report the list of series used in the empirical exercise. We use the same mnemonic group and transformation as described in McCracken and Ng. In particular, the column tcode denotes the following data transformation for a series $x_t$: (1) no transformation; (2) $\Delta x_t$; (3)$\Delta^2 x_t$; (4) $ log(x_t)$; (5) $\Delta log(x_t)$; (6) $\Delta^2 log(x_t)$. (7) $\Delta (x_t/x_{t-1} - 1.0)$. Series marked with * have not been included. The data used refer to the vintages \texttt{2022-01.csv} that can be downloaded from %

\noindent \url{https://research.stlouisfed.org/econ/mccracken/fred-databases/}.
{\footnotesize

\begin{longtable}{cclc}
\caption{Description of the dataset used in the empirical exercises} \label{tab:fred} \\

\hline \multicolumn{1}{c}{\textbf{ID}} & \multicolumn{1}{c}{\textbf{TCODE}} & \multicolumn{1}{c}{\textbf{Description}} & \multicolumn{1}{c}{\textbf{Group}} \\ \hline
\endfirsthead

\multicolumn{4}{c}%
{{\bfseries \tablename\ \thetable{} -- continued from previous page}} \\
\hline \multicolumn{1}{c}{\textbf{ID}} & \multicolumn{1}{c}{\textbf{TCODE}} & \multicolumn{1}{c}{\textbf{Description}} & \multicolumn{1}{c}{\textbf{Group}} \\ \hline
\endhead

\hline \multicolumn{4}{r}{{Continued on next page}} \\
\endfoot

\hline \hline
\endlastfoot

    1     & 5     & Real Personal Income & 1 \\
    2     & 5     & Real personal income ex transfer receipts & 1 \\
    3     & 5     & Real personal consumption expenditures & 4 \\
    4     & 5     & Real Manu.  and Trade Industries Sales & 4 \\
    5     & 5     & Retail and Food Services Sales & 4 \\
    6     & 5     & IP Index & 1 \\
    7     & 5     & IP: Final Products and Nonindustrial Supplies & 1 \\
    8     & 5     & IP: Final Products (Market Group) & 1 \\
    9     & 5     & IP: Consumer Goods & 1 \\
    10    & 5     & IP: Durable Consumer Goods & 1 \\
    11    & 5     & IP: Nondurable Consumer Goods & 1 \\
    12    & 5     & IP: Business Equipment & 1 \\
    13    & 5     & IP: Materials & 1 \\
    14    & 5     & IP: Durable Materials & 1 \\
    15    & 5     & IP: Nondurable Materials & 1 \\
    16    & 5     & IP: Manufacturing (SIC) & 1 \\
    17    & 5     & IP: Residential Utilities & 1 \\
    18    & 5     & IP: Fuels & 1 \\
    20    & 2     & Capacity Utilization:  Manufacturing & 1 \\
    21    & 2     & Help-Wanted Index for United States & 2 \\
    22    & 2     & Ratio of Help Wanted/No.  Unemployed & 2 \\
    23    & 5     & Civilian Labor Force & 2 \\
    24    & 5     & Civilian Employment & 2 \\
    25    & 2     & Civilian Unemployment Rate & 2 \\
    26    & 2     & Average Duration of Unemployment (Weeks) & 2 \\
    27    & 5     & Civilians Unemployed - Less Than 5 Weeks & 2 \\
    28    & 5     & Civilians Unemployed for 5-14 Weeks & 2 \\
    29    & 5     & Civilians Unemployed - 15 Weeks \& Over & 2 \\
    30    & 5     & Civilians Unemployed for 15-26 Weeks & 2 \\
    31    & 5     & Civilians Unemployed for 27 Weeks and Over & 2 \\
    32    & 5     & Initial Claims & 2 \\
    33    & 5     & All Employees:  Total nonfarm & 2 \\
    34    & 5     & All Employees:  Goods-Producing Industries & 2 \\
    35    & 5     & All Employees:  Mining and Logging:  Mining & 2 \\
    36    & 5     & All Employees:  Construction & 2 \\
    37    & 5     & All Employees:  Manufacturing & 2 \\
    38    & 5     & All Employees:  Durable goods & 2 \\
    39    & 5     & All Employees:  Nondurable goods & 2 \\
    40    & 5     & All Employees:  Service-Providing Industries & 2 \\
    41    & 5     & All Employees:  Trade, Transportation \& Utilities & 2 \\
    42    & 5     & All Employees:  Wholesale Trade & 2 \\
    43    & 5     & All Employees:  Retail Trade & 2 \\
    44    & 5     & All Employees:  Financial Activities & 2 \\
    45    & 5     & All Employees:  Government & 2 \\
    46    & 1     & Avg Weekly Hours :  Goods-Producing & 2 \\
    47    & 2     & Avg Weekly Overtime Hours :  Manufacturing & 2 \\
    48    & 1     & Avg Weekly Hours :  Manufacturing & 2 \\
    50    & 4     & Housing Starts:  Total New Privately Owned & 3 \\
    51    & 4     & Housing Starts, Northeast & 3 \\
    52    & 4     & Housing Starts, Midwest & 3 \\
    53    & 4     & Housing Starts, South & 3 \\
    54    & 4     & Housing Starts, West & 3 \\
    55    & 4     & New Private Housing Permits (SAAR) & 3 \\
    56    & 4     & New Private Housing Permits, Northeast (SAAR) & 3 \\
    57    & 4     & New Private Housing Permits, Midwest (SAAR) & 3 \\
    58    & 4     & New Private Housing Permits, South (SAAR) & 3 \\
    59    & 4     & New Private Housing Permits, West (SAAR) & 3 \\
    64    & 5     & New Orders for Consumer Goods* & 4 \\
    65    & 5     & New Orders for Durable Goods & 4 \\
    66    & 5     & New Orders for Nondefense Capital Goods* & 4 \\
    67    & 5     & Unfilled Orders for Durable Goods & 4 \\
    68    & 5     & Total Business Inventories & 4 \\
    69    & 2     & Total Business:  Inventories to Sales Ratio & 4 \\
    70    & 6     & M1 Money Stock & 5 \\
    71    & 6     & M2 Money Stock & 5 \\
    72    & 5     & Real M2 Money Stock & 5 \\
    73    & 6     & Monetary Base & 5 \\
    74    & 6     & Total Reserves of Depository Institutions & 5 \\
    75    & 7     & Reserves Of Depository Institutions & 5 \\
    76    & 6     & Commercial and Industrial Loans & 5 \\
    77    & 6     & Real Estate Loans at All Commercial Banks & 5 \\
    78    & 6     & Total Nonrevolving Credit & 5 \\
    79    & 2     & Nonrevolving consumer credit to Personal Income & 5 \\
    80    & 5     & S\&P's Common Stock Price Index: Composite & 8 \\
    81    & 5     & S\&P's Common Stock Price Index: Industrials & 8 \\
    82    & 2     & S\&P's Composite Common Stock: Dividend Yield & 8 \\
    83    & 5     & S\&P's Composite Common Stock: Price-Earnings Ratio & 8 \\
    84    & 2     & Effective Federal Funds Rate & 6 \\
    85    & 2     & 3-Month AA Financial Commercial Paper Rate & 6 \\
    86    & 2     & 3-Month Treasury Bill: & 6 \\
    87    & 2     & 6-Month Treasury Bill: & 6 \\
    88    & 2     & 1-Year Treasury Rate & 6 \\
    89    & 2     & 5-Year Treasury Rate & 6 \\
    90    & 2     & 10-Year Treasury Rate & 6 \\
    91    & 2     & Moody's Seasoned Aaa Corporate Bond Yield & 6 \\
    92    & 2     & Moody's Seasoned Baa Corporate Bond Yield & 6 \\
    93    & 1     & 3-Month Commercial Paper Minus FEDFUNDS & 6 \\
    94    & 1     & 3-Month Treasury C Minus FEDFUNDS & 6 \\
    95    & 1     & 6-Month Treasury C Minus FEDFUNDS & 6 \\
    96    & 1     & 1-Year Treasury C Minus FEDFUNDS & 6 \\
    97    & 1     & 5-Year Treasury C Minus FEDFUNDS & 6 \\
    98    & 1     & 10-Year Treasury C Minus FEDFUNDS & 6 \\
    99    & 1     & Moody's Aaa Corporate Bond Minus FEDFUNDS & 6 \\
    100   & 1     & Moody's Baa Corporate Bond Minus FEDFUNDS & 6 \\
    101   & 5     & Trade Weighted U.S. Dollar Index* & 6 \\
    102   & 5     & Switzerland / U.S. Foreign Exchange Rate & 6 \\
    103   & 5     & Japan / U.S. Foreign Exchange Rate & 6 \\
    104   & 5     & U.S. / U.K. Foreign Exchange Rate & 6 \\
    105   & 5     & Canada / U.S. Foreign Exchange Rate & 6 \\
    106   & 6     & PPI: Finished Goods & 7 \\
    107   & 6     & PPI: Finished Consumer Goods & 7 \\
    108   & 6     & PPI: Intermediate Materials & 7 \\
    109   & 6     & PPI: Crude Materials & 7 \\
    110   & 6     & Crude Oil, spliced WTI and Cushing & 7 \\
    111   & 6     & PPI: Metals and metal products: & 7 \\
    113   & 6     & CPI : All Items & 7 \\
    114   & 6     & CPI : Apparel & 7 \\
    115   & 6     & CPI : Transportation & 7 \\
    116   & 6     & CPI : Medical Care & 7 \\
    117   & 6     & CPI : Commodities & 7 \\
    118   & 6     & CPI : Durables & 7 \\
    119   & 6     & CPI : Services & 7 \\
    120   & 6     & CPI : All Items Less Food & 7 \\
    121   & 6     & CPI : All items less shelter & 7 \\
    122   & 6     & CPI : All items less medical care & 7 \\
    123   & 6     & Personal Cons.  Expend.:  Chain Index & 7 \\
    124   & 6     & Personal Cons.  Exp:  Durable goods & 7 \\
    125   & 6     & Personal Cons.  Exp:  Nondurable goods & 7 \\
    126   & 6     & Personal Cons.  Exp:  Services & 7 \\
    127   & 6     & Avg Hourly Earnings :  Goods-Producing & 2 \\
    128   & 6     & Avg Hourly Earnings :  Construction & 2 \\
    129   & 6     & Avg Hourly Earnings :  Manufacturing & 2 \\
    130   & 2     & Consumer Sentiment Index* & 4 \\
    132   & 6     & Consumer Motor Vehicle Loans Outstanding & 5 \\
    133   & 6     & Total Consumer Loans and Leases Outstanding & 5 \\
    134   & 6     & Securities in Bank Credit at All Commercial Banks & 5 \\
    135   & 1     & VIX*  & 8 \\

\end{longtable}%
}

\normalsize

\end{document}